\theoremstyle{definition}
\newtheorem{lemma}{Lemma}[section]
\newtheorem{thm}{Theorem}[section]
\newtheorem{remark}{Remark}[section]
\newcommand{\norm}[1]{\left\lVert#1\right\rVert}
\newcommand{\abs}[1]{\left|#1\right|}
\newcommand{\unb}{\underbrace}
\newcommand{\td}{\text{d}}
\def\be{\begin{equation}}
\def\ee{\end{equation}}
\def\bea{\begin{eqnarray}}
\def\eea{\end{eqnarray}}
\title{\bf{Small deformations of extreme five dimensional Myers-Perry  black hole initial data}}
\author{Aghil Alaee\footnote{aak818@mun.ca } \,\,and  Hari K. Kunduri\footnote{hkkunduri@mun.ca } \\ \\
\small \sl $^a$ Department of Mathematics and Statistics, \\  \small \sl Memorial University of Newfoundland \\ \small \sl St John's NL A1C 4P5, Canada}
\date{}
\begin{document}

\maketitle






\begin{abstract}
We demonstrate the existence of a one-parameter family of initial data for the vacuum Einstein equations in five dimensions representing small deformations of the extreme Myers-Perry black hole. This initial data set has `$t-\phi^i$' symmetry and preserves the angular momenta and horizon geometry of the extreme solution. Our proof is based upon an earlier result of Dain and Gabach-Clement concerning the existence of $U(1)$-invariant initial data sets which preserve the geometry of extreme Kerr (at least for short times).  In addition, we construct a general class of transverse, traceless symmetric rank 2 tensors in these geometries.
\end{abstract}

\section{Introduction}
\label{intro}
Einstein's equations admit an initial-value formulation, with Cauchy data specified by the triple $(\Sigma, h,K)$ where $\Sigma$ is a Riemannian manifold  equipped with a metric tensor $h$ and $K$ represents the second fundamental form of $\Sigma$,  regarded as a spacelike hypersurface of spacetime.  The field equations, together with the Gauss-Codazzi equations,  impose the constraints
\begin{eqnarray}\label{eq:constraints}
R_h - K^{ab}K_{ab} + (\text{tr} K)^2 &=& 8\pi \mu \nonumber \\ \nabla^b\left(K_{ab} - \text{tr} K h_{ab}\right) &=& -4\pi j_a
\end{eqnarray} where $R_h$ is the scalar curvature of $(\Sigma,h)$ and $(\mu,j)$ are the local energy density and momentum current respectively. The complete proof  that solutions to the constraints evolve into a unique maximal development, for sufficient regularity, is a significant achievement (see e.g. \cite{choquet2009general} and for a concise summary \cite{bartnik2004constraint}) . It is an important problem to actually construct initial data with desired properties. This involves identifying the freely specifiable `degrees of freedom' and then determining whether a corresponding solution exists and is unique. 

A useful approach to achieve this is the conformal method (\cite{choquet1980cauchy,choquet2000einstein,Dain:2010uh}). In the special case of data with constant mean curvature ($\text{tr} K =$ const) the problem reduces to solving a conformally invariant system of equations for the conformal factor and a vector field which generates the extrinsic curvature. For spatially closed and asymptotically Euclidean initial data sets, one can prove existence using the conformal method \cite{choquet2000einstein} (for spacetime dimension $D \geq 4$). Subsequently,  Maxwell \cite{maxwell2005solutions} constructed asymptotically Euclidean initial data with apparent horizon boundary conditions (in particular, he treated the case with multiple disconnected apparent horizons). This case is naturally relevant to black holes. 

While the above results are powerful in their generality, one can also consider the existence of initial data with very specific geometrical properties.  This paper will be concerned with initial data sets which have one Euclidean end and one cylindrical end. Roughly, the latter means an initial data set $(\Sigma,h)$ has an asymptotic end which is  diffeomorphic to $\mathbb{R} \times N$ where $N$ is a compact manifold.  A systematic analysis of initial data on manifolds with cylindrical ends  was performed in \cite{chrusciel4937initial,chrusciel5138initial}.  In particular, existence of solutions of Lichnerowiscz's equation is proved using the powerful barrier method \cite{isenberg1995constant}. The purpose of our analysis, however, is to prove the existence of a rather specific class of perturbed initial data with additional properties (e.g. preserving angular momenta of the background data).  We will make clear at the end of this section how our results are related to  \cite{chrusciel4937initial,chrusciel5138initial}.

Initial data sets with cylindrical ends arise within the context of stationary, extreme black holes. Extreme black holes with degenerate Killing horizons have vanishing surface gravity $\kappa =0$, and in the limit as one approaches the horizon, Einstein's equations  decouple in a precise manner into a set of equations defined only on the horizon \cite{Kunduri:2013gce}. This gives rise to the notion of a \emph{near-horizon geometry}, which if often thought of as an infinite `throat' region in the spacetime (indeed the proper length to a spatial section of the horizon is infinite).   

 Extreme black holes have attracted a great deal of interest in recent years. Due to the decoupling described above, classifying near-horizon geometries is tractable and yields important information on the full space of extreme solutions (e.g. allowed geometries and topologies of spatial cross sections). Furthermore, extreme black hole geometries saturate a number of geometric inequalities which must hold for initial data sets and for marginally outer trapped surfaces in four dimensions  \cite{dain2006proof,dain2008proof,chrusciel2009mass} (see also \cite{hollands2012horizon} for work on the latter problem in $D>4$).   Finally, extreme black holes have the simplest microscopic description within string theory, and so are an important testing ground for various calculations in quantum gravity, the most well-known of which is black hole entropy counting.  Recently, due to the work of Aretakis  and others \cite{aretakis2011stability,Aretakis:2012ei,lucietti2012gravitational,Lucietti:2012xr,Murata:2013daa}, extreme black holes have been shown to be unstable to a certain horizon instability. An alternative approach to studying the non-linear instability of the extreme Kerr-Newman family using perturbations of the initial data of extreme Reisnner-Nordstsrom also has recently appeared \cite{reiris2013instability}.

A spacelike slice of such a near-horizon geometry has the form of the geometry of a cylindrical end, where $N \cong H$, a spatial cross-section of the horizon.  Hence initial data for an asymptotically flat extreme black hole has one asymptotic Euclidean region and an asymptotically cylindrical end.  The simplest example of this occurs for initial data of the extreme $M=\sqrt{J}$ Kerr black hole \cite{Dain:2010uh}. These authors, using the conformal method alluded to above, proved that there exists a one-parameter family of axisymmetric initial data of the vacuum Einstein equations which preserve the asymptotic behaviour, angular momenta, and area of the cylindrical end (this area corresponds to the area of the spatial sections of the horizon of the Kerr black hole). In particular, as a consequence of the geometric inequalities, one can show the energy of  any member of this family must be strictly greater than that of the extreme Kerr initial data.  Note that the solutions satisfy weak regularity conditions (i.e. they belong to a certain Sobolev space) and in particular are not generically smooth, let alone analytic. This last distinction could be important when considering the evolution of this initial data. The extreme Kerr black hole is known to be the unique  (analytic) vacuum, stationary, rotating asymptotically flat spacetime containing a single degenerate horizon \cite{Hollands:2008wn}\cite{Amsel:2009et,Figueras:2009ci} . Hence the evolution of the initial data sets discussed above could settle down to non-analytic asymptotically flat (possibly stationary) extreme black holes. Of course, we cannot address this issue without understanding the evolution.  

It is natural to investigate the possibility of extending the result of \cite{Dain:2010uh} to extreme, five-dimensional black holes. The simplest candidate would be extreme Myers-Perry black hole \cite{myers1986black}, which is qualitatively similar to Kerr. A maximal slice can be found with $U(1)^2$ isometry and has topology $\mathbb{R} \times S^3$ \cite{0264-9381-31-5-055004}. However there are two main differences as one moves from $n=3$ to $n=4$ spatial dimensions. First, it turns out we will have to construct solutions of the constraint equations which belong to  Bartnik's weighted Sobolev spaces $W'^{k,p}_\delta$ \cite{bartnik1986mass}. Our asymptotic fall-off conditions at the Euclidean end and cylindrical end require $kp > n$ (see Lemma A.1 in \cite{Dain:2010uh}). We only require weak differentiability to second order, so we take $(k,p,\delta) = (2,3,-1)$\footnote{One could also take $(k,p,\delta) = (3,2,-1)$ but this leads to a stronger regularity condition for a particular elliptic operator 
and the functions in the background metric do not satisfy this regularity.} whereas in the analysis of \cite{Dain:2010uh}, $(k,p,\delta)=(2,2,-1/2)$. The latter spaces are weighted Hilbert spaces, which prove extremely useful in the elegant construction given in \cite{Dain:2010uh}.  Second, we require five scalar functions to characterize our data as opposed to two and our geometries have $U(1)^2$ symmetry which complicates the parameterization of the extrinsic curvature.  \par \noindent Our main result is Theorem \ref{maintheorem} and it can be informally stated as follows:\\
\begin{itshape}
\par \noindent There exists a one parameter, $U(1)^2$-invariant, maximal family  of solutions to Einstein's constraint equations. This family of data is second order differentiable with respect to an appropriate norm and it has the same angular momentum and area of the event horizon of an extreme Myers-Perry black hole. Moreover, the geometry of this family is close (in a suitable sense) to the extreme Myers-Perry initial data set.
\end{itshape}  \\

It is important to clarify what is new about this result and how it is related to the analysis of \cite{chrusciel4937initial,chrusciel5138initial}. In particular, Theorem 6.1 of \cite{chrusciel4937initial} asserts the existence of a class of solutions to Lichnerowicz's equation for complete initial data with non-negative scalar curvature and strictly positive scalar curvature on cylindrical ends.   These results are quite powerful and general in that no symmetry assumptions are made on the data.  However, if one wishes to impose additional conditions (e.g. axisymmetry) on the data, one might be interested if there exists special families of data with the same ADM energy, conserved angular momenta and/ or area of the cylindrical end.  This work is concerned with finding a class of initial data suitably close to the extreme Myers-Perry data that preserves the angular momenta and area of its cylindrical end. This data can be interpreted as perturbations of extreme Myers-Perry.  To prove this result, we need to first consider a more general problem of finding transverse, traceless symmetric rank 2 tensors on $U(1)^{2}$-invariant geometries which generalizes \cite{dain2001initial}.  To the best of our knowledge, this work has not appeared before and should be useful in various contexts when considering initial data with symmetries. 

\par This paper is organized as follows. In Section 2 we discuss the maximal slices of the extreme Myers-Perry solution.  Section 3 states our theorem and Section 4 provides most of the technical details in the proof. We conclude with a discussion. The appendices collects a number of useful theorems which we use in our proof, and some technical properties of the Myers-Perry solution which we use to establish our result. 
\section{Initial Data with $U(1)^2$ symmetry}
In this work we consider general initial data sets $(\Sigma,h,K)$ which are invariant under $U(1)^2$ isometry.  In addition we will restrict attention to maximal slices, i.e. $\text{tr} K =0$.  Finally, as explained in detail below, the Myers-Perry maximal initial data set of interest has a further useful property (`$t-\phi^i$' symmetry) and we will impose this on our class of initial data sets as well. In the following for convenience we will simply refer to initial data satisfying these various conditions as `biaxisymmetric'.
\subsection{Extreme Myers-Perry black hole and initial data}\label{sec1}
Our starting point is the five-dimensional vacuum Myers-Perry black hole $(M,g)$  with metric \cite{myers2011myers}
\begin{eqnarray}
g&=&-\td t^2+\frac{\mu}{\Sigma}\left(\td t+a\sin^2\theta \td\varphi+b\cos^2\theta \td\psi\right)^2+\frac{\tilde{r}^2\Sigma}{\Delta({\tilde{r}})} \td  \tilde{r}^2
+ \Sigma \td\theta^2 \nonumber \\ &+& \left(\tilde{r}^2+a^2\right)\sin^2\theta \td\varphi^2
+\left(\tilde{r}^2+b^2\right)\cos^2\theta \td\psi^2
\end{eqnarray}
where
\begin{gather}
\Sigma=\tilde{r}^2+b^2\sin^2\theta+a^2\cos^2\theta,\\
\Delta({\tilde{r}})=\left(\tilde{r}^2+a^2\right)\left(\tilde{r}^2+b^2\right)-\mu \tilde{r}^2.
\end{gather}
The solution is parameterized by $(\mu,a,b)$ with orthogonally transitive isometry group $\mathbb{R}_t\times U(1)^2$, where $\mathbb{R}$ is the time translation symmetry and $U(1)^2$ is the rotational symmetry generated by $\partial_{\psi}$ and $\partial_{\varphi}$. Here $(\tilde{r},\theta)$ parameterize the two-dimensional surfaces orthogonal to orbits of the isometry group.   We take $\mu >0$ so that the mass of the spacetime $M>0$ and without loss of generality we take $a,b >0$. The horizons of this black hole are located at the roots of $\Delta({\tilde{r}})$, denoted $\tilde{r}_{H\pm}$. 
The metric is written in a chart that covers the black hole exterior $\tilde{r}_{H+} < \tilde{r} < \infty$. In addition $0 < \theta < \pi/2$, and $\psi,\phi$ are periodic with period $2\pi$. 
As is well known, the solution is qualitatively similar to the Kerr solution. In the extreme limit, $\mu=(a+b)^2$ and $\Delta({\tilde{r}})=\left(\tilde{r}^2-ab\right)^2$. We define a new radial coordinate $r^2=\tilde{r}^2-ab$, resulting in
\begin{eqnarray}
g&=&-\td t^2+\frac{\mu}{\Sigma}\left(\td t+a\sin^2\theta \td\varphi+b\cos^2\theta \td\psi\right)^2+\frac{\Sigma}{r^2}\td r^2
+ \Sigma \td \theta^2 \nonumber\\ &+&\left(r^2+ab+a^2\right)\sin^2\theta \td\varphi^2
+\left(r^2+ab+b^2\right)\cos^2\theta \td\psi^2.
\end{eqnarray}
\begin{figure}
\centering
\subfloat{\label{fig:1(b)}
\begin{tikzpicture}[scale=1.1, every node/.style={scale=0.6}]
\fill[gray!40!white](0,0)--(1,1)--(2,0)--(1,-1)--(0,0);
\draw[black,thick](0,0)--(-1,1)--(0,2)--(1,1);
\draw[black,thick](0,0)--(1,-1)--(2,0)--(1,1);
\draw[black,thick](0,0)--(-1,-1)--(1,-3)--(2,-2)--(1,-1);
\draw[black,thick](1,1)--(2,2);
\draw[black,thick](0,0)--(1,1);
\draw[black,thick](0,-2)--(1,-1);
\draw[red,thick,decorate, decoration={zigzag,segment length = .8mm, amplitude = .2mm}](0,-2)..controls(.2,-1.2)and(.2,-.8)..(0,0);
\draw[red,thick,decorate, decoration={zigzag,segment length = .8mm, amplitude = .2mm}](0,0)..controls(.2,.8)and(.2,1.2)..(0,2)node[black,rotate=90,left=16mm,above=.5mm]{Singularity};
\draw[red,thick,decorate, decoration={zigzag,segment length = .8mm, amplitude = .2mm}](0,-2)..controls(.1,-2.4)and(.1,-2.8)..(0,-3);
\draw[blue,thick]node[black,thick,right=2cm,above=.3mm,font=\large]{$\Sigma$}(0,0)--(2,0);
\draw[gray,->]node[black,right=155pt,above=20pt]{Event horizon}(2.1,.6)--(.65,.6);
\draw[->] (3.2,-.5)..controls(2.8,0) and (2.5,.3)..(1.5,.1)node[black,right=3.1cm,below=1.2cm]{Slice};
\end{tikzpicture}}
\caption{Carter-Penrose diagram of extreme Myers-Perry black hole. The gray region is domain of outer communication=DOC}
\end{figure}
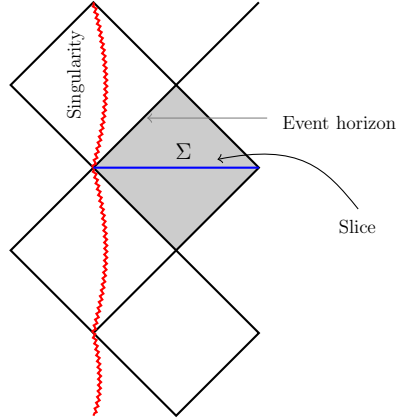

where $r>0$. There is a degenerate Killing horizon located at $r=0$, which can be seen by transforming to an adapted Gaussian coordinate system.  The  mass and angular momenta are easily evaluated using Komar integrals, giving
\begin{equation}
M=\frac{3\pi}{8}\mu,\quad\quad\quad J^{\varphi}=\frac{\pi \mu a}{4},\quad\quad\quad J^{\psi}=\frac{\pi \mu b}{4} \, .
\end{equation} The extremality  condition can be written 
\begin{equation}
M^3 = \frac{27 \pi}{32} (J^\psi + J^\varphi)^2
\end{equation}
Consider a spacelike hypersurface $\Sigma$ corresponding to a  $t =$ constant slice in the above geometry.  The induced metric and extrinsic curvature $\Sigma$ is easily found to be 
\begin{eqnarray}
h&=&\frac{\Sigma}{r^2}\td r^2
+ \Sigma \td \theta^2+\left(\frac{a^2\sin^2\theta\mu}{\Sigma}+\left(r^2+ab+a^2\right)\right)\sin^2\theta \td \varphi^2+2\frac{ab\cos^2\theta\sin^2\theta\mu}{\Sigma} \td\varphi \td\psi\nonumber\\
&+&\left(\frac{b^2\cos^2\theta\mu}{\Sigma}+\left(r^2+ab+b^2\right)\right)\cos^2\theta \td\psi^2 \label{slicemet}
\end{eqnarray}
\begin{eqnarray}
K&=&-\frac{a\mu \left(r^2+ab+b^2\right)\left(\Sigma+r^2+ab+a^2\right)}{\Sigma^2\sqrt{g^{tt}}r^3}\sin^2\theta drd\varphi+\frac{a\mu(a^2-b^2)\cos\theta\sin^3\theta}{\Sigma^2\sqrt{g^{tt}}} d\theta d\varphi\nonumber\\
&-&\frac{b\mu \left(r^2+ab+a^2\right)\left(\Sigma+r^2+ab+b^2\right)}{\Sigma^2\sqrt{g^{tt}}r^3}\cos^2\theta drd\psi+\frac{b\mu(a^2-b^2)\cos^3\theta\sin\theta}{\Sigma^2\sqrt{g^{tt}}} d\theta d\psi\nonumber\\
\end{eqnarray} Although not time-symmetric,  this initial data has in addition '$t-\phi^i$' symmetry (under the simultaneous diffeomorphisms $(\varphi, \psi) \to (-\varphi, -\psi)$ $h$ is invariant and $K$ reverses sign) \cite{figueras2011black}. This symmetry in particular implies $\text{tr} K =0$, i.e. the slices are maximal. The triple $(\Sigma, h,K)$ forms a vacuum maximal initial data set (i.e. a solution of \eqref{eq:constraints} with $\mu = j= 0$) for the extreme black hole exterior:
\begin{equation}\label{eq:constraints2}
R_h - K^{ab}K_{ab} =0 \qquad \nabla^bK_{ab} = 0
\end{equation} 
The pair $(\Sigma,h)$ represents a Riemannian manifold with one asymptotically flat end and one asymptotically cylindrical end. $\Sigma$ is diffeormorphic to $\mathbb{R}\times S^3 \cong \mathbb{R}^4\backslash \{0\}$ \cite{0264-9381-31-5-055004} and the spatial metric \eqref{slicemet}  is  a cohomogeneity two,  asymptotically flat metric that extends globally onto $\Sigma$.  The metric has the following fall off conditions at its asymptotically flat end:
\begin{gather}
h_{ab}=\delta_{ab}+\mathcal{O}(r^{-2}),\quad\quad\quad \partial h_{ab}=\mathcal{O}(r^{-3})\\
K_{ab}=\mathcal{O}(r^{-4}),\quad\quad\quad \partial K_{ab}=\mathcal{O}(r^{-5})
\end{gather}
where $\delta$ is the Euclidean metric on $\mathbb{R}^4$.
To investigate the geometry of \eqref{slicemet} as $r \to 0$, perform the transformation $s = -\ln r$. This reveals a new asymptotic region corresponding to the limit $s \to \infty$ with  geometry 
\begin{equation}
h= \sigma(\theta) \left[\td s^2 + \td\theta^2 + \frac{(a+b)^2}{\sigma(\theta)^2}\left(a^2 \sin^2\theta \td \varphi^2 + b^2\cos^2\theta \td \psi^2+ ab\left[\cos^2\theta \td \varphi + \sin^2\theta \td \psi\right]^2\right)\right]
\end{equation} 
\begin{eqnarray}
K=\frac{\mu\sqrt{ab}(a\cos^2\theta+b\sin^2\theta+a)}{\sqrt{(a\cos^2\theta+b\sin^2\theta)^3(a+b)^3}}\sin^2\theta dsd\varphi+\frac{\mu\sqrt{ab}(a\cos^2\theta+b\sin^2\theta+b)}{\sqrt{(a\cos^2\theta+b\sin^2\theta)^3(a+b)^3}}\cos^2\theta dsd\psi\nonumber
\end{eqnarray}It can be shown that the $(\theta,\varphi,\psi)$ part of the metric can be globally extended to an inhomogeneous metric on $S^3$ \cite{kunduri2009classification}. Thus as $r \to 0$ $(\Sigma,h)$ has a cylindrical end with geometry $\mathbb{R} \times S^3$.  A schematic diagram of the slice $(\Sigma,h)$ is given in Figure \ref{fig2}(a). 

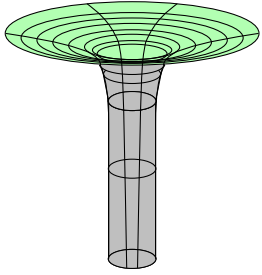
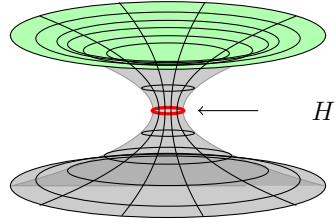
\begin{figure}
\centering
\subfloat[{Spatial slice of an extreme black hole with one cylindrical end and one asymptotic end.}]{\label{fig:2(a)}
\begin{tikzpicture}[scale=.6, every node/.style={scale=0.7}]
\draw[fill=gray!50!white](-.52,-5)--(0,-5)--(0,-5) ellipse (15pt and 6pt)--(0,-5)--(.52,-5)--(.52,-1.5)--(0,-1.5)--(0,-1.5) ellipse (15pt and 6pt)--(0,-1.5)--(-.52,-1.5)--(-.52,-5)--(0,-5);
\draw[fill=gray!50!white](-.52,-1.5)--(0,-1.5)--(0,-1.5) ellipse (15pt and 6pt)--(0,-1.5)--(.52,-1.5)--(.68,-.7)--(0,-.7)--(0,-.7) ellipse (18pt and 6pt)--(0,-.7)--(-.68,-.7)--(-.52,-1.5)--(0,-1.5);
\draw[fill=gray!50!white](0,-.9) ellipse (18pt and 6pt);
\draw[fill=gray!50!white] (0,-.7) ellipse (20pt and 6pt);
\draw[fill=gray!50!white] (0,-1.5) ellipse (15pt and 6pt);
\draw[fill=gray!50!white] (0,-3) ellipse (15pt and 6pt);
\draw[fill=gray!50!white] (0,-5) ellipse (15pt and 6pt);
\draw[fill=green!30!white] (0,0) ellipse (80pt and 20pt);
\draw[black] (0,-.1) ellipse (70pt and 16pt);
\draw[black] (0,-.15) ellipse (60pt and 14pt);
\draw[black] (0,-.2) ellipse (50pt and 12pt);
\draw[black] (0,-.27) ellipse (40pt and 10pt);
\draw[black] (0,-.3) ellipse (30pt and 8pt);
\draw[black] (0,-.37) ellipse (25pt and 6pt);
\draw[black] (0,-.45) ellipse (22pt and 6pt);
\draw (2.8,0).. controls (1.5,-.2)and(1.3,-.3) .. (1,-.4)..controls (.8,-.5)and(.7,-.6) .. (.7,-.7)..controls (.8,-.7)and(.52,-.7) .. (.52,-2)..controls (.52,-2.5)and(.52,-3) .. (.52,-5);
\draw (1,.66).. controls (.8,.5)and(.5,.3) .. (.3,-.2)..controls (.2,-.7)and(.2,-.7) .. (.12,-5.2);
\draw (-1,.66).. controls (-.8,.5)and(-.5,.3) .. (-.3,-.2)..controls (-.2,-.7)and(-.2,-.7) .. (-.12,-5.2);
\draw (-2.8,0).. controls (-1.5,-.2)and(-1.3,-.3) .. (-1,-.4)..controls (-.8,-.5)and(-.7,-.6) .. (-.7,-.7)..controls (-.8,-.7)and(-.52,-.7) .. (-.52,-2)..controls (-.52,-2.5)and(-.52,-3) .. (-.52,-5);
\end{tikzpicture}}
\hspace{12em}
\subfloat[{Spatial slice of an non-extreme black hole with two asymptotic ends.}]{\label{fig:2(b)}
\begin{tikzpicture}[scale=.4, every node/.style={scale=0.7}]
\draw [fill=gray!80!white,fill opacity=0.5,gray!80!white]plot[smooth,samples=100,variable=\y,domain=-2.5:2.5]({3*\y*\y/4+.5},\y)--(2.583,2.5)--(-2.583,2.5)--plot[smooth,samples=100,variable=\y,domain=2.5:-2.5]({-3*\y*\y/4-.5},\y)--(-2.583,-2.5)--(2.583,-2.5);
\draw[fill=green!30!white] (0,2.5) ellipse (149pt and 32pt);
\draw(0,2.4) ellipse (125pt and 25pt);
\draw(0,2.3) ellipse (105pt and 20pt);
\draw(0,2.2) ellipse (95pt and 16pt);
\draw (0,2) ellipse (80pt and 14pt);
\draw[black] (0,1.9) ellipse (58pt and 10pt);
\draw[black] (0,.75) ellipse (25pt and 3pt);
\draw[red,very thick] (0,0) ellipse (15pt and 3pt);
\draw [fill=gray!80!white,fill opacity=0.5](0,-2.5) ellipse (149pt and 30pt);
\draw (0,-2.3) ellipse (125pt and 25pt);
\draw[black] (0,-2) ellipse (95pt and 14pt);
\draw[black] (0,-1.5) ellipse (60pt and 8pt);
\draw[black] (0,-.75) ellipse (25pt and 3pt);
\draw[->,black] (3,0)--(1,0)node[black,thick,right=2cm,font=\large]{$H$};
\draw plot[smooth,samples=100,variable=\y,domain=-3.15:3.15]({2*\y*\y/5+.3},\y);
\draw plot[smooth,samples=100,variable=\y,domain=-3.15:3.15]({-2*\y*\y/5-.3},\y);
\draw plot[smooth,samples=100,variable=\y,domain=-3.5:3.6]({2*\y*\y/23+.1},\y);
\draw plot[smooth,samples=100,variable=\y,domain=-3.5:3.6]({-2*\y*\y/17-.1},\y);
\end{tikzpicture}}
\caption{Slice of Myers-Perry solution in extreme and usual cases}\label{fig2}
\end{figure} 
\subsection{Biaxisymmetric Initial Data} \label{sec2}
Our goal is to construct initial data which represent deformations of the Myers-Perry initial data discussed above. The strategy, following \cite{dain2012geometric}, is to use the conformal method to  reduce the problem to an elliptic PDE for a single scalar field. First however, we will need to parameterize our initial data sets appropriately, to isolate the functional degrees of freedom.  For the biaxisymmetric data sets under consideration, there is a convenient way to achieve this, which generalizes the approach for axisymmetric three-dimensional initial data sets. 

Consider an asymptotically flat spacetime with an isometry group which admits an $U(1)^2$ subgroup.  We now briefly review the computation of angular momenta from the twist potentials (see \cite{hollands2012horizon} for a general discussion with $U(1)^{D-3}$ rotational symmetries). We will denote the generators of rotational symmetries as $m_i$, $i=1,2$.  The orbits of the $m_i$ have period $2\pi$.  For simplicity, assume there is one asymptotic end and let $S^3_\infty$ represent the sphere at spatial infinity.  We will take $n$ and $s$ to be unit timelike and spacelike vector fields which span the tangent space normal to $S^3_\infty$. As is well known, one can define conserved angular momenta from the Komar integrals
\begin{equation}
J_i = \frac{1}{16\pi} \int_{S^3_{\infty}} \star \td m_i = \frac{1}{16\pi} \int_{S^3_\infty} i_n i_s \td m_i \, \td S
\end{equation} where $\td S$ represents the volume element on $S^3_\infty$.   The vacuum equations imply the following one-forms are closed:
\begin{equation}
\lambda_i = \star (m_1 \wedge m_2 \wedge \td m_i)
\end{equation} and we may therefore define local twist potentials $\omega_i$ satisfying $\td \omega_i = \lambda_i$. We assume that the spacetime is simply connected so the $\omega_i$ are globally defined.   We can then evaluate the $J_i$ in terms of these twist potentials as follows.  Using the fact the $m_i$ generate commuting isometries and applying an interior derivative with respect to $m_i$, we find
%
\begin{eqnarray} \label{KVFcov}
\td m_i &=& \frac{1}{\text{det} g_{ij}}\left[i_{m_1} i_{m_2} \star \lambda_i  + \left(g(m_1,m_2) \td g_{2i} - g(m_2,m_2) \td g_{1i}\right)m_1\right. \nonumber  \\ &+& \left.  \left(g(m_1,m_2) \td g_{1i} - g(m_1,m_1) \td g_{2i}\right) m_2\right]
\end{eqnarray} where $g(m_i,m_j) = m_i \cdot m_j$ and $\text{det}g_{ij} = g(m_1,m_1)g(m_2,m_2) - (g(m_1,m_2))^2$. Thus we find
\begin{equation}
i_{n} i_{s} \td m_i =  \frac{i_n i_s i_{m_1} i_{m_2} \star \td \omega_i}{\text{det} g_{ij}}
\end{equation} Define the one-form 
\begin{equation}\label{xi}
\xi \equiv (\det g_{ij})^{-1/2} \star (n \wedge s \wedge m_1 \wedge m_2) \to \xi_a = (\det g_{ij})^{-1/2} \epsilon_{abcde} n^b s^c m_1^d m_2^e
\end{equation} It can be checked that $\xi$ has unit length and is orthogonal to remaining members of the co-frame $(n,s,m_i)$ (and in particular is  tangent to $S^3_\infty$). We may then define a coordinate $x$ such that $\xi$ is proportional to $\td x$, i.e. $\xi = \sqrt{g_{xx}} \td x$. As discussed in precise detail in \cite{hollands2012horizon}, $x$ parameterizes $S^3_\infty / U(1)^2$ and we normalize it so $-1 \leq x \leq 1$ where $x = \pm 1$ correspond to the poles where the $m_i$ vanish.  We then have $i_n i_s \star \td m_i = (\det g_{ij})^{-1/2} \xi \cdot \td \omega_i$ and so  
\begin{equation}
J_i = \ \frac{1}{16 \pi} \int_{S^3_\infty} (\det g_{ij})^{-1/2} \left(\xi \cdot (\td \omega_i) \right) \, \td S  = \frac{\pi}{4} \int_{-1}^{1} \frac{\partial \omega_i}{\partial x}\, \td x = \frac{\pi}{4} \left(\omega_i(1) - \omega_i(-1)\right)
\end{equation}

It is useful in the following to work with respect to the preferred tetrad $(n,s,\xi,m_i)$.  For concreteness we introduce the vector fields $\eta^a = m_1^a$ and $\gamma^a = m_2^b$ with associated scalar products $\eta^a \eta_a = \eta, \eta^a \gamma_a = L, \gamma^a \gamma_a = \gamma$, and $H = \det g_{ij} = \eta \gamma - L^2$.  One can  write the metric in this basis as 
\begin{equation}
g_{ab}=-n_an_b+s_as_b+\xi_a \xi_b + \frac{\gamma}{H}\eta_a\eta_b+\frac{\eta}{H}\gamma_a\gamma_b-\frac{2L}{H}\eta_{(a}\gamma_{b)}
\end{equation}
In this notation, the expression \eqref{KVFcov} is
\begin{eqnarray}
\bar{\nabla}^k\eta^l&=&\frac{1}{2H}\epsilon^{aijlk}\eta_i\gamma_j{\lambda_1}_a+\frac{2}{H}\gamma_j\bar{\nabla}^{[k}\eta^{|j|}P^{l]}+\frac{1}{H}\bar{\nabla}^{[k}\eta B^{l]}+\frac{1}{H}\gamma_j\bar{\nabla}^j\eta\eta^{[l}\gamma^{k]}\label{eta}\\
\bar{\nabla}^k\gamma^l&=&\frac{1}{2H}\epsilon^{aijlk}\eta_i\gamma_j{\lambda_2}_a+\frac{1}{H}\bar{\nabla}^{[k}\gamma P^{l]}+\frac{2}{H}\eta_j\bar{\nabla}^{[k}\gamma^{|j|} B^{l]}+\frac{1}{H}\gamma_j\bar{\nabla}^j\eta\eta^{[l}\gamma^{k]}\label{gamma}
\end{eqnarray}
where $P^k\equiv \eta\gamma^k-L\eta^k$, and $B^k\equiv\eta^k\gamma-L\gamma^k$. 
As in four dimensions, we will now establish sufficient conditions under which one can construct the extrinsic curvature from potentials.  We will restrict attention to initial data that are invariant under the biaxial $U(1)^2$ symmetry, i.e. $(\Sigma,h)$ admits an $U(1)^2$ acting as isometries and $L_{m_i} K=0$ where once again we denote the generators of the rotational symmetries by  $m_i$.  We are of course interested in spacetimes with $U(1)^2$ isometry, and one can always find initial data surfaces with this symmetry; conversely, given initial data with these symmetries, the evolution will preserve the symmetry, although they may be `hidden' \cite{KIDS}. To begin, it is easiest to introduce a tetrad for the metric on the spacelike slice such that  (in the rest of this note, we will use Latin indices $a,b \ldots$ to run over $1\dots 4$) 
\begin{equation}
h_{ab} = q_{ab} + h_{ij}m^i_a m^j_b
\end{equation} where $q_{ab} = s_a s_b + \xi_a \xi_b$ is the part of the metric orthogonal to the surfaces of transitivity of the $U(1)^2$ action and  $h_{ij} = m_i \cdot m_j$. In terms of the spacetime tetrad presented earlier, one can think of $h_{ab}$ as the metric induced on the surface with normal $n$, i.e. $h_{ab} = g_{ab} + n_a n_b$.  We then have
\begin{eqnarray}
h =q_{ab}+\frac{\gamma}{H}\eta_a\eta_b+\frac{\eta}{H}\gamma_a\gamma_b-\frac{2L}{H}\eta_{(a}\gamma_{b)}
\end{eqnarray}   Now consider a maximal slice $\textrm{tr} K =0$ with a given extrinsic curvature tensor $K_{ab}$. The square of this tensor, which appears in the constraint equations \eqref{eq:constraints}, can be computed in the frame defined above:
\begin{eqnarray}
K_{ab}K^{ab}&=&K_{ab}K_{cd}h^{ac}h^{bd}\nonumber\\
&=&{K_{ab}K_{cd}q^{ac}q^{bd}}+\left(K_{ab}\eta^a\eta^b\right)^2\frac{\gamma^2}{H^2}+\left(K_{ab}\gamma^a\gamma^b\right)^2\frac{\eta^2}{H^2}+\left(K_{ab}\eta^a\eta^b\right)\left(K_{ab}\gamma^a\gamma^b\right)\frac{4L^2}{H^2}\nonumber\\
&+&\frac{2\gamma\eta}{H^2}\left(K_{ab}\eta^a\gamma^b\right)^2
-\frac{2L\gamma}{H^2}\left(K_{ab}\eta^a\eta^b\right)\left(K_{cd}\eta^c\gamma^d\right)
-\frac{2L\eta}{H^2}\left(K_{ab}\eta^a\gamma^b\right)\left(K_{cd}\gamma^c\gamma^d\right)\nonumber\\
&+&\frac{2\gamma}{H}S^2_aS^{2a}+\frac{2\eta}{H}S^1_aS^{1a}-\frac{4L}{H}S^1_aS^{2a}\label{3}
\end{eqnarray}
where the $S^i_a$ are defined as 
\begin{eqnarray}
S^1_a&\equiv& K_{ab}\eta^b+\frac{L}{H}\gamma_a K_{cd}\eta^c\eta^d-\frac{\gamma}{H}\eta_a K_{cd}\eta^c\eta^d+\frac{L}{H}\eta_a K_{cd}\gamma^c\eta^d-\frac{\eta}{H}\gamma_a K_{cd}\gamma^c\eta^d\label{S2}\\
S^2_a&\equiv& K_{ab}\gamma^b+\frac{L}{H}\eta_a K_{cd}\gamma^c\gamma^d-\frac{\eta}{H}\gamma_a K_{cd}\gamma^c\gamma^d+\frac{L}{H}\gamma_a K_{cd}\gamma^c\eta^d-\frac{\gamma}{H}\eta_a K_{cd}\gamma^c\eta^d,\label{S1}
\end{eqnarray}
One can easily check that $S^1_a$ and $S^2_a$ are each orthogonal to  $\gamma^a$ and $\eta^a$  and they are invariant under Lie derivatives with respect to $\gamma^a$ and $\eta^a$.  Furthermore, because $K_{ab}$ satisfies the momentum constraint, $\nabla_a K^{ab} =0$, one can deduce that the $S^i$ are divergenceless; $\td \star S^i =0$. 
Now define the one-forms
\begin{equation}
\mathcal{K}^i = \star (S^i \wedge m_2 \wedge m_1)
\end{equation} which in our basis take the form
\begin{gather}
\mathcal{K}^1_{a}=\epsilon_{abcd}S^{1b}\gamma^c\eta^d,\quad\quad \mathcal{K}^2_{a}=\epsilon_{abcd}S^{2b}\gamma^c\eta^d
\end{gather} We note these forms are closed, i.e. 
\begin{equation}
\td \mathcal{K}^i = i_{m_2} i _{m_1} \td \star S^i = 0
\end{equation}  We then define the potentials $\bar\omega_i$ by 
\begin{equation}
\mathcal{K}^i = -\frac{\td \bar{\omega}_i}{2}
\end{equation} These $\bar\omega_i$ are in fact the pullback to $\Sigma$ of the spacetime twist potentials $\omega_i$ defined in the previous section. The proof of this statement is similar to the three dimensional case given in \cite{dain2008axisymmetric}, using the expression \eqref{eta}.  Hence as we are working at the level of initial data we will simply use $\omega_i$ from now on.   One can invert these expressions to find
\begin{equation}
S^i =  \frac{1}{H} i_{m_2} i_{m_1} \star \mathcal{K}^i = \frac{1}{2H} i_{m_1} i_{m_2} \star \td{\omega}^i
\end{equation} In our tetrad basis, 
\begin{equation}
S^1_a = -\frac{1}{H}\epsilon_{a b c d} \gamma^b \eta^c \mathcal{K}^{1d} = \frac{1}{2H}\epsilon_{a b c d} \gamma^b \eta^c \td {\omega}^{1d}   \qquad S^2_a = -\frac{1}{H}\epsilon_{a b c d} \gamma^b \eta^c \mathcal{K}^{2d} = \frac{1}{2H}\epsilon_{a b c d} \gamma^b \eta^c \td {\omega}^{2d}
\end{equation} 
Now we define a symmetric, divergence free, and trace free tensor field
\begin{equation}
\bar{K}_{ab}:=\frac{2}{H}\Bigg[\left(\eta S^2{}_{(a}\gamma{}_{b)}-LS^1{}_{(a}\gamma{}_{b)}\right)+\left(\gamma S^1{}_{(a}\eta{}_{b)}-LS^2{}_{(a}\eta{}_{b)}\right)\Bigg].\label{1}
\end{equation}
Then the full contraction of this tensor is
\begin{equation}
\bar{K}_{ab}\bar{K}^{ab}=\frac{2}{H}\Bigg[\gamma S^1_aS^{1a}+\eta S^2_aS^{2a}-2LS^1_aS^{2a}\Bigg]
\end{equation}
Now assume the initial data has $t-\phi^i$ symmetry as defined in \cite{figueras2011black}. Here  $\phi^i$ are coordinates adapted to the commuting Killing fields $m_i$.  Then:  1) $\partial / \partial \phi^i$ are Killing vector generator of $U(1)^2$ isometry group of $(\Sigma,h_{ab})$, and 2) $\phi^i\rightarrow -\phi^i$ is a diffeomorphism which preserves $h_{ab}$ but reverses the sign of $K_{ab}$. This is equivalent to the following conditions
\begin{enumerate}
\item $K_{ab}\eta^a\eta^b=K_{ab}\gamma^a\gamma^b=K_{ab}\gamma^a\eta^b=0$
\item $K_{ab}q^{ac}q^{bd}=0$
\end{enumerate}
By condition (1)  all terms of (\ref{3}) are zero except the last three ones, also by definition of $S^i_a$ we have 
\begin{equation}
S^1_a=K_{ab}\gamma^b\quad\quad\quad S^2_a=K_{ab}\eta^b
\end{equation}
Moreover, by condition (2) we have
\begin{eqnarray}
0&=&K_{ab}q^{ac}q^{bd}\nonumber\\
&=&K_{ab}\left(h^{ac}-\frac{\gamma}{H}\eta^a\eta^c-\frac{\eta}{H}\gamma^a\gamma^c+\frac{2L}{H}\eta^a\gamma^c\right)\left(h^{bd}-\frac{\gamma}{H}\eta^b\eta^d-\frac{\eta}{H}\gamma^b\gamma^d+\frac{2L}{H}\eta^b\gamma^d\right)\nonumber\\
&=&K^{cd}-\frac{2}{H}\Bigg[\left(\eta S{}^{2(c}\gamma{}^{d)}-LS{}^{1(c}\gamma{}^{d)}\right)+\left(\gamma S{}^{1(c}\eta{}^{d)}-LS{}^{2(c}\eta{}^{d)}\right)\Bigg]\nonumber\\
&=&K^{cd}-\bar{K}^{cd}
\end{eqnarray}
Then we see that an arbitrary maximal $t-\phi^i$-symmetric extrinsic curvature tensor can be constructed from the twist potentials $\omega_i$:
\begin{equation}
K_{ab}=\bar{K}_{ab}\label{2}
\end{equation} We emphasize that in general, one cannot construct the complete extrinsic curvature tensor directly from twist potentials. 
 Although we are not going to use the following, it is interesting to see the relationship between the form of the extrinsic curvature given in (\ref{2}) and the expression given in \cite{figueras2011black}, valid in $t-\phi^i$ symmetry: 
\begin{equation}
K_{ab}=J^i{}_{(a}\phi^i{}_{b)}\quad\quad i=1,2 
\end{equation}
(see also  \cite{0264-9381-31-5-055004}).  We then have
\begin{equation}
J^1_a=\frac{2}{H}\left(\eta S^2_a-LS^1_a\right),\quad\quad J^2_a=\frac{2}{H}\left(\gamma S^1_a-LS^2_a\right)
\end{equation}

\section{Main result}
The classical method to prove the existence of solutions to the constraint equations is the conformal method.  In the case of extreme Myers-Perry with initial data we defined in section \ref{sec1} $(\Sigma,h_{ab},K_{ab})$ we can write the metric in the following conformal form 
\begin{equation}
h_{ab}=\Phi_0^2\, \tilde{h}_{ab}.
\end{equation}
with 
\begin{eqnarray}
\tilde{h}=e^{2U}\left(\td\rho^2+\td z^2\right)+\sigma'_{ij}\td\phi^i\td\phi^j\label{htild}
\end{eqnarray}
where $\rho=\frac{1}{2}r^2\sin 2\theta$ ,$z=\frac{1}{2}r^2\cos 2\theta$, $\det\sigma'_{ij}=\rho^2$, $\phi^1=\varphi$, and $\phi^2=\psi$. The conformal factor $\Phi_0$ and the functions $U$ and $\sigma'_{ij}$ in the metric are defined and studied in Appendix \ref{AppC}.  We may also write
\begin{equation}
K_{ab}=\Phi^{-2}_0\tilde{K}_{ab}.
\end{equation} and by section \ref{sec2},  $t-\phi^i$ symmetry of the  initial data implies we may express the second factor as
\begin{equation}
\tilde{K}_{ab}=\frac{2}{\rho^2}\Bigg[\left(\sigma'_{22} \tilde{S}^1{}_{(a}\eta{}_{b)}-\sigma'_{12}\tilde{S}^2{}_{(a}\eta{}_{b)}\right)+\left(\sigma'_{11} \tilde{S}^2{}_{(a}\gamma{}_{b)}-\sigma'_{12}\tilde{S}^1{}_{(a}\gamma{}_{b)}\right)\Bigg]\label{ktild}
\end{equation}
where $\gamma^a=\left(\frac{\partial}{\partial\varphi}\right)^a$, $\eta^a=\left(\frac{\partial}{\partial\psi}\right)^a$ and
\begin{gather}
\tilde{S}^1_{a}=\frac{1}{2\rho^2}\tilde{\epsilon}_{abcd}\gamma^b\eta^c\tilde{\nabla}^d\omega_{\varphi},\quad\quad \tilde{S}^2_{a}=\frac{1}{2\rho^2}\tilde{\epsilon}_{abcd}\eta^b\gamma^c\tilde{\nabla}^d\omega_{\psi}
\end{gather}
where the twist potentials $\omega_i$ are given in Appendix \ref{AppC} and $\tilde{\epsilon}_{abcd}$ and $\tilde{\nabla}$ are respectively the volume element and the connection  associated to $\tilde{h}_{ab}$. This particular form of extrinsic curvature implies $\tilde{\nabla}_{a}\tilde{K}^{ab}=0$ and so it satisfies \eqref{eq:constraints2}.  We will perturb about this solution (i.e. we freely specify variations of the functions appearing in the metric $\tilde{h}$ with appropriate fall-off behaviour) and demonstrate the existence of a conformal factor $\Phi$ that solves  the constraint equations, yielding a new family of initial data $(\Sigma, h_{ab}, K_{ab})$ where $\Phi_0$ is replaced by $\Phi$ above.  More precisely, our main result is
\begin{thm}\label{maintheorem} Let $(\Sigma,h_{ab},K_{ab})$ be the maximal biaxisymmetric initial data set of extreme Myers-Perry described in section \ref{sec1} with angular momenta $J_{\varphi}$ and $J_{\psi}$ and mass $M$. Then there is a small $\lambda_0$ such that for $-\lambda_0<\lambda<\lambda_0$ there exists a family of initial datasets $(\Sigma,h_{ab}^{\lambda},K_{ab}^{\lambda})$ (i.e. solutions of the constraints on $\Sigma$) such that:
\begin{enumerate}
\item For $\lambda=0$ the family of initial data is that of extreme Myers-Perry initial data, i.e. $(\Sigma,h_{ab},K_{ab})$. The family is differentiable in $\lambda$ and it is close to extreme Myers-Perry with respect to an appropriate norm which involves two derivatives of the metric.
\item The data have the same asymptotic geometry as the extreme Myers-Perry initial dataset. The angular momenta and the area of the cylindrical end in the family do not depend on $\lambda$; they have same value as in $(\Sigma,h_{ab},K_{ab})$, namely $J_{\varphi}$, $J_{\psi}$ and $A_0$ , respectively.
\item The family of data are biaxisymmetric and maximal (i.e $\text{tr}K^{\lambda}=0$).
\end{enumerate}
\end{thm}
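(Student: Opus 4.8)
\section*{Proof strategy}

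The plan is to use the conformal method, following \cite{Dain:2010uh,dain2012geometric}. Since the data are maximal and the extrinsic curvature has the twist‑potential form \eqref{ktild} built in Section \ref{sec2} — so that $\tilde\nabla_a\tilde K^{ab}=0$ and $\mathrm{tr}_{\tilde h}\tilde K=0$ — the momentum constraint in \eqref{eq:constraints2} is automatically satisfied by any rescaling $K^\lambda_{ab}=\Phi^{-2}\tilde K^\lambda_{ab}$, and the only remaining equation is the Hamiltonian constraint. With $h^\lambda_{ab}=\Phi^2\,\tilde h^\lambda_{ab}$ this becomes the Lichnerowicz equation in $n=4$,
\begin{equation}
6\,\tilde\Delta_{\tilde h^\lambda}\Phi-\tilde R_{\tilde h^\lambda}\,\Phi+|\tilde K^\lambda|^2_{\tilde h^\lambda}\,\Phi^{-5}=0,
\end{equation}
for which the background value $\Phi_0$ of Appendix \ref{AppC} solves the $\lambda=0$ case. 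The free data are the functions $U$, $\sigma'_{ij}$ and the twist potentials $\omega_i$ entering $\tilde h^\lambda$ and $\tilde K^\lambda$; I would take a one–parameter perturbation of these that stays in the biaxisymmetric class, keeps $\det\sigma'_{ij}=\rho^2$, has the same asymptotics at the Euclidean and cylindrical ends, fixes the boundary values of the $\omega_i$ (hence the angular momenta, which depend only on the twist potentials via $J_i=\tfrac{\pi}{4}(\omega_i(1)-\omega_i(-1))$ and not on the conformal factor), and fixes the limiting cross–sectional geometry of the cylindrical end (hence its area $A_0$). All of this is measured in the Bartnik weighted space $W'^{2,3}_{-1}$ \cite{bartnik1986mass} adapted to the two ends.

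Next I would substitute $\Phi=\Phi_0(1+\psi)$ (or, to keep positivity manifest near the cylindrical end, $\Phi=\Phi_0e^{\psi}$) and view the equation as $F(\lambda,\psi)=0$ on a neighbourhood of the origin in $\mathbb{R}\times W'^{2,3}_{-1}$, with $F(0,0)=0$ and values in a weighted $L^3$ space. One must check $F$ is $C^1$: here the choice $(k,p,\delta)=(2,3,-1)$ is essential, since $kp=6>4=n$ gives $W'^{2,3}_{-1}\hookrightarrow C^0$, which both keeps $1+\psi$ positive and makes the Nemytskii map $\psi\mapsto(1+\psi)^{-5}$ differentiable, whereas the stronger regularity demanded by $(3,2,-1)$ fails for the background metric functions. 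The partial derivative $D_\psi F(0,0)$ is the linearised Lichnerowicz operator at $\Phi_0$; writing $\Phi=\Phi_0v$ and using the background equation, one finds it is conjugate to $\Delta_h-R_h$ on the physical slice $(\Sigma,h)$, and since $R_h=|K|^2_h\ge0$ — strictly positive on the cylindrical end — this operator obeys the maximum principle and has trivial kernel on the weighted space. Combined with the Fredholm theory for elliptic operators on a manifold with one Euclidean and one cylindrical end \cite{chrusciel4937initial,chrusciel5138initial} (checking that $\delta=-1$ is not an exceptional weight at either end, using the asymptotics established in the appendices), $D_\psi F(0,0)$ is a Banach–space isomorphism, and the implicit function theorem gives, for $|\lambda|<\lambda_0$, a unique small $\psi_\lambda$ depending differentiably on $\lambda$ with $\psi_0=0$; set $\Phi_\lambda=\Phi_0(1+\psi_\lambda)$ and $(\Sigma,h^\lambda_{ab},K^\lambda_{ab})=(\Sigma,\Phi_\lambda^2\tilde h^\lambda_{ab},\Phi_\lambda^{-2}\tilde K^\lambda_{ab})$.

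Finally I would read off the three conclusions. Property (1) is immediate: $\lambda=0$ reproduces $\Phi_0$, and differentiability and closeness follow from the implicit function theorem together with $\psi_\lambda$ being small in $W'^{2,3}_{-1}$, a norm controlling two derivatives of the metric. For (2), the prescribed asymptotics of the perturbation together with $\psi_\lambda\to0$ at both ends give the same asymptotic geometry; $J_\varphi,J_\psi$ are unchanged because they are computed from the fixed twist potentials and are conformally invariant; and the area of the cylindrical end is unchanged because its limiting cross–sectional geometry is held fixed and $\psi_\lambda$ vanishes there. Property (3) holds because the perturbation is taken within the biaxisymmetric class and $\tilde K^\lambda$ retains the twist–potential form of Section \ref{sec2}, so $\mathrm{tr}_{h^\lambda}K^\lambda=\Phi_\lambda^{-4}\,\mathrm{tr}_{\tilde h^\lambda}\tilde K^\lambda=0$. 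The main obstacle is the functional–analytic heart of the second paragraph: proving that the linearised operator is an isomorphism between the correct weighted Sobolev spaces. This needs a careful asymptotic analysis of $\Phi_0$, $U$, $\sigma'_{ij}$ and $\omega_i$ at the cylindrical end — to identify the indicial roots and confirm $R_h>0$ there — and at the Euclidean end, which is why much of the technical content is relegated to the appendices; verifying that $F$ is genuinely $C^1$ for the chosen $(k,p,\delta)$ is the other delicate point.
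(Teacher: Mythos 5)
Your plan is the conformal method plus the implicit function theorem on Bartnik weighted spaces, which is exactly the paper's strategy, and it would indeed work; the differences are in the detailed implementation rather than a gap. The most consequential one is the treatment of the elliptic operator: the paper exploits the factorization $\Delta_{\tilde h}\Phi = \tfrac{e^{-2U}}{r^2}\Delta_4\Phi$ valid on biaxisymmetric functions, multiplies the Lichnerowicz equation through by $r^2 e^{2U}$, and thereby reduces everything to the \emph{flat} Laplacian $\Delta_4$ on $\mathbb{R}^4\setminus\{0\}$. The cylindrical end is then just the weighted puncture at $r=0$ in Bartnik's $W'^{2,3}_{-1}$, and the isomorphism is obtained from the Maxwell-style asymptotically Euclidean Fredholm-plus-maximum-principle result (Appendix A), not from the general cylindrical-end Fredholm theory of Chru\'sciel et al.\ that you invoke. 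Your route would presumably also succeed, but you would then have to identify indicial roots of $\Delta_h - R_h$ on the genuine cylindrical end; the paper's flattening trick sidesteps that and makes the weight analysis explicit, at the cost of a potential $\alpha$ that scales like $r^{-6}$ near the origin and so must be checked to lie in $L'^3_{-3}$ (Lemma C.1).

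The second difference is the ansatz: you use the multiplicative $\Phi=\Phi_0(1+\psi)$ (or $\Phi_0 e^{\psi}$), whereas the paper writes $\Phi=\Phi_0+u$ and controls positivity by bounding the size of the ball $O_y$ against the lower bound $r\Phi_0\ge(ab\mu)^{1/4}$. Your conjugation argument — that the linearized operator is equivalent to $\Delta_h - R_h$ with $R_h=|K|^2_h\ge 0$ from the background Hamiltonian constraint — is the conceptual content of the paper's Lemma C.1, which establishes the same nonnegativity by the explicit identity $\alpha = \tfrac{1}{2}\tilde K_0^2\Phi_0^{-6}+r^2(\td v)^2$ with $v=\log\Phi_0$. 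So you have correctly identified the sign condition that drives injectivity; the paper just verifies it by a direct computation on the conformal data rather than by passing to the physical metric. Either formulation, combined with the $C^1$ estimates and the Fredholm-index-zero argument, yields the isomorphism and hence the implicit function theorem, and your readings of conclusions (1)--(3) from the construction are correct.
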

\noindent An important parameter of an initial data set with a cylindrical end is the area of a cross-section. If $A(r)$ is the area of constant $r$, we have
\begin{equation}
A_0=\lim_{r\rightarrow 0} A(r)=2\pi^2\mu^2\sqrt{ab}.
\end{equation}
This corresponds to the area of the event horizon of the corresponding extreme Myers-Perry black hole.  Consider a member of the family of initial data set $(\Sigma,h_{ab}^{\lambda},K_{ab}^{\lambda})$ for fixed $\lambda\neq 0$.  By an argument similar to that given in \cite{Dain:2010uh}, the fall-off of the lapse and shift can always be selected so that the geometry of the cylindrical end and its area will be preserved, for sufficiently short times.

\section{Proof of main result} We now turn to the derivation of the result discussed in the previous section. 
\begin{proof}
Let $(\Sigma,h_{ab},K_{ab})$ be the maximal initial data set (given in Appendix C) of the extreme Myers-Perry black hole. These satisfy the constraint equations:
\begin{gather}
R-K_{ab}K{}^{ab}=0,  \label{vaccons1}\\
\nabla^aK_{ab}=0. \label{vaccons2}
\end{gather}
To construct a solution of these constraint equations we use classical conformal method with rescaling
\begin{equation}
h_{ab}=\Phi^2_0\tilde{h}_{ab},\quad\quad K_{ab}=\Phi^{-2}_0\tilde{K}_{ab}.
\end{equation}
where $\tilde{h}_{ab}$ and $\tilde{K}_{ab}$ are defined in equations (\ref{htild}) and (\ref{ktild}), respectively. In conformal data the constraint equations are 
\begin{gather}
\Delta_{\tilde{h}}\Phi_0-\frac{1}{6}\tilde{R}\Phi_0 +\frac{1}{6}\tilde{K}_{ab}\tilde{K}^{ab}\Phi_0^{-5}=0.\label{32}\\
\tilde{\nabla}_b\tilde{K}^{ab}=0. \label{15}
\end{gather}
By construction $\tilde{K}_{ab}$ in section \ref{sec2} is always divergence-free and traceless, so the momentum constraint equation (\ref{15}) is automatically satisfied and we need only consider the Lichnerowicz equation (\ref{32}). The Laplace operator associated with the metric \eqref{htild} (for any $U,\sigma'_{ij}$) in biaxial symmetry can be written 
\begin{eqnarray}
\Delta_{\tilde{h}}\Phi=\frac{e^{-2U}}{r^2}\Delta_4\Phi
\end{eqnarray}  
Where $\Phi$ is an arbitrary function of only $r$ and $\theta$ and $\Delta_4$ is the flat four dimensional Laplace operator respect to metric 
\begin{eqnarray}
\delta_4&=&\frac{1}{2\sqrt{\rho^2+z^2}}\left(\td\rho^2+\td z^2\right)+\left(\sqrt{\rho^2+z^2}-z\right)\td\varphi+\left(\sqrt{\rho^2+z^2}+z\right)\td\psi^2\nonumber\\
&=&\td r^2+r^2\td\theta^2+r^2\sin^2\theta\td\varphi+r^2\cos^2\theta\td\psi^2
\end{eqnarray}
The scalar curvature of the metric \eqref{htild} is
\begin{eqnarray}\label{curv1}
\tilde{R}=e^{-2U}\left( -2 \Delta_2 U +\frac{\det\td\sigma'}{2\rho^2}\right)\equiv\frac{e^{-2U}}{r^2}\tilde{R}_0
\end{eqnarray}
where $\Delta_2$ is the Laplacian with respect to the flat 2 dimensional metric i.e. $\delta_2=\td\rho^2+\td z^2$. The extrinsic curvature is 
\begin{eqnarray}\label{K1}
\tilde{K}_{ab}\tilde{K}^{ab}&=&\frac{e^{-2U}}{2\rho^4}\Bigg[\sigma'_{11}(\td\omega_{\psi})^2+\sigma'_{22}(\td\omega_{\varphi})^2-2\sigma'_{12}(\td\omega_{\varphi})\cdot(\td\omega_{\psi})\Bigg]\equiv\frac{e^{-2U}}{r^2}\tilde{K}_0^2
\end{eqnarray}
where $\cdot$ is the inner product with respect to $\delta_2$. Then the Lichnerowicz equation  \eqref{32} for the conformal triple $(\Sigma,\tilde{h}_{ab},\tilde{K}_{ab})$ is
\begin{eqnarray}
\Delta_4\Phi_0-\frac{\tilde{R}_0}{6}\Phi_0+\frac{\tilde{K}^2_0}{6\Phi_0^5}=0\label{33}
\end{eqnarray}

\noindent We now perturb equation (\ref{33}) about the solution given by the maximal initial data for the extreme Myers-Perry black hole by taking
\begin{eqnarray}
U&\rightarrow& U+\lambda U_1\nonumber\\
\sigma'_{ij}&\rightarrow& \sigma'_{ij}+\lambda \bar{\sigma}_{ij}\nonumber\\
\omega_{\varphi}&\rightarrow& \omega_{\varphi}+\lambda \omega_{1}\nonumber\\
\omega_{\psi}&\rightarrow& \omega_{\psi}+\lambda \omega_{2}\label{34}
\end{eqnarray}
for a fixed set of $U(1)^2$-invariant functions $U_1$, $\bar{\sigma}_{ij}$, $\omega_{1}$, $\omega_{2}$, and small $\lambda$, and then seek a solution $\Phi$ of the form 
\begin{equation}
\Phi=\Phi_0+u.\label{35}
\end{equation} where $u$ is a function to be determined. 
Inserting (\ref{34}) and (\ref{35}) into (\ref{33}), we have
\begin{equation}
\mathcal{E}(\lambda,u)=0.\label{36}
\end{equation}
where
\begin{eqnarray}
\mathcal{E}(\lambda,u)&=&\Delta_4\left(\Phi_0+u\right)-\frac{1}{6}\tilde{R}_{\lambda}\left(\Phi_0+u\right)+\frac{\tilde{K}^{2}_{\lambda}}{6(\Phi_0+u)^5}\label{47}
\end{eqnarray}
where $\tilde{R}_{\lambda}$ and  $\tilde{K}^{2}_{\lambda}$ are obtained from $\tilde{R}_0$ and  $\tilde{K}_0$  using the transformation (\ref{34}). If we plug in $\lambda=0$ in equation ($\ref{47}$), we have equation ($\ref{33}$).  Then to prove theorem \ref{maintheorem}, it is enough to show existence and uniqueness of the solution of equation (\ref{36}) and this will be done in the next lemma. We then obtain a family of solutions $(\Sigma,h^\lambda_{ab}, K^\lambda_{ab})$ to \eqref{vaccons1} and \eqref{vaccons2} with $h^\lambda_{ab} = \Phi^2 \tilde{h}^\lambda_{ab}$ and $K^\lambda_{ab} = \Phi^{-2} \tilde{K}^\lambda_{ab}$. 
\end{proof}
\begin{lemma} Let $\omega_1,\omega_2\in C^{\infty}_c(\mathbb{R}^4\backslash \Gamma)$ and ${U}_1,\bar{\sigma}_{ij}\in C^{\infty}_c(\mathbb{R}^4\backslash \{0\})$. Then, there exists  $\lambda_0>0$ such that for all $\lambda\in(-\lambda_0,\lambda_0)$
\begin{enumerate}
\item There exists a solution $u(\lambda)$  of (\ref{36}) belonging to $ W^{'2,3}_{-1}$. (for clarify we suppress the $r-$ and $\theta-$ dependence of $u(\lambda)$). 
\item $u(\lambda)$ is continuously differentiable in $\lambda $ and $\Phi_0+u(\lambda)>0$.
\item $u(\lambda)$ is the unique solution of (\ref{36}) for small $u$ and small $\lambda$.
\end{enumerate}
\end{lemma}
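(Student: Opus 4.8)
The plan is to prove this lemma by an application of the implicit function theorem on Banach spaces, exactly in the spirit of Dain--Gabach-Clement's treatment of extreme Kerr, but carried out in Bartnik's weighted spaces $W'^{k,p}_\delta$ with $(k,p,\delta)=(2,3,-1)$ rather than the weighted Hilbert spaces used there. First I would set up the functional-analytic framework: define the map $\mathcal{E}\colon (-\epsilon,\epsilon)\times W'^{2,3}_{-1}\to W'^{0,3}_{-3}$ by the expression \eqref{47}, where the target weight $\delta-2=-3$ is dictated by the fact that $\Delta_4$ lowers the decay exponent by two and the asymptotically-flat/cylindrical fall-off conditions of Section \ref{sec1} force this choice (this is the Sobolev-space bookkeeping alluded to in the introduction; I would cite Lemma A.1 of \cite{Dain:2010uh} and the mapping properties of $\Delta_4$ on $W'^{k,p}_\delta$ from \cite{bartnik1986mass}). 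Then one checks: (i) $\mathcal{E}(0,0)=0$, which is precisely equation \eqref{33}, i.e. the statement that the background extreme Myers-Perry data solves the Lichnerowicz equation; (ii) $\mathcal{E}$ is $C^1$ (indeed continuously differentiable) jointly in $(\lambda,u)$ in a neighbourhood of the origin — this requires $\Phi_0$ to be bounded below away from zero on $\Sigma$ (true since it is built from the explicit Myers-Perry functions, relegated to Appendix \ref{AppC}) so that the nonlinear term $(\Phi_0+u)^{-5}$ is smooth there, and it requires the perturbation terms $\tilde R_\lambda,\tilde K^2_\lambda$ to depend smoothly on $\lambda$, which follows since $U_1,\bar\sigma_{ij},\omega_1,\omega_2$ are compactly supported smooth functions so all $\lambda$-derivatives stay in the relevant weighted space.

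The key step is then to show that the linearisation $D_u\mathcal{E}(0,0)\colon W'^{2,3}_{-1}\to W'^{0,3}_{-3}$ is a Banach-space isomorphism. Computing the Fr\'echet derivative of \eqref{47} in $u$ at $(0,0)$ gives the linear operator
\begin{equation}
L u = \Delta_4 u - \frac{\tilde R_0}{6} u - \frac{5\tilde K_0^2}{6\Phi_0^6} u = \Delta_4 u - V u,
\end{equation}
where $V = \tilde R_0/6 + 5\tilde K_0^2/(6\Phi_0^6)$. Here the crucial structural point, inherited from the background being a genuine solution, is that $V\geq 0$: differentiating the background equation \eqref{33} or using the constraint \eqref{vaccons1} one sees that $\tilde R_0$ and $\tilde K_0^2$ are controlled so that $V$ is a nonnegative potential (one may need $\tilde R_0\ge 0$, which holds for the conformal representative chosen, together with the manifestly nonnegative second term). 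With $V\ge 0$, the operator $-L = -\Delta_4 + V$ is injective on $W'^{2,3}_{-1}$ (a Liouville/maximum-principle or integration-by-parts argument: if $Lu=0$ then $\int |\nabla u|^2 + \int V u^2 = 0$ after checking the boundary terms at the Euclidean and cylindrical ends vanish because of the $\delta=-1$ weight), and it is surjective with bounded inverse because $\Delta_4$ is an isomorphism $W'^{2,3}_{-1}\to W'^{0,3}_{-3}$ (no exceptional weights in between for this operator on $\mathbb{R}^4\setminus\{0\}$) and the multiplication-by-$V$ term is a compact — or at least relatively small — perturbation, so the Fredholm alternative plus injectivity yields invertibility. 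I would quote the relevant isomorphism and Fredholm statements for weighted spaces from the appendix theorems the paper collects.

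Granting the isomorphism, the implicit function theorem on Banach spaces yields $\lambda_0>0$ and a unique $C^1$ curve $\lambda\mapsto u(\lambda)\in W'^{2,3}_{-1}$ with $u(0)=0$ and $\mathcal{E}(\lambda,u(\lambda))=0$ for $|\lambda|<\lambda_0$, which is conclusion (1) together with the differentiability half of (2) and the local uniqueness (3). The positivity $\Phi_0+u(\lambda)>0$ in (2) then follows by continuity: $\Phi_0$ is bounded below by a positive constant, $u(0)=0$, and $\lambda\mapsto u(\lambda)$ is continuous into $W'^{2,3}_{-1}\hookrightarrow C^0$ (Sobolev embedding, valid since $kp=6>4=n$), so shrinking $\lambda_0$ if necessary keeps $\|u(\lambda)\|_\infty$ small enough. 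The main obstacle I anticipate is the invertibility of the linearisation — specifically pinning down that the potential $V$ is nonnegative (or at least that $-\Delta_4+V$ has trivial kernel) in the weighted space, and checking there is no exceptional weight of $\Delta_4$ obstructing the isomorphism $W'^{2,3}_{-1}\to W'^{0,3}_{-3}$ on the manifold with a cylindrical end; the compact-support hypotheses on the perturbations are there precisely to keep the nonlinear and $\lambda$-dependent error terms harmlessly inside the weighted spaces so that this linear analysis is all that is really needed.
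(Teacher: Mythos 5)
Your overall strategy matches the paper's: set up the Lichnerowicz map $\mathcal{E}$ between the Bartnik weighted spaces $W'^{2,3}_{-1}\to L'^{3}_{-3}$, verify it is $C^1$, show the linearisation at $(0,0)$ is an isomorphism, and invoke the implicit function theorem. The places where the proposal is incomplete or off track are the two technical hearts of the argument.

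First, and most seriously, the nonnegativity of the potential $\alpha = \tilde R_0/6 + 5\tilde K_0^2/(6\Phi_0^6)$ is the linchpin of the injectivity argument, and your proposal does not actually establish it. You hedge that ``one may need $\tilde R_0\ge 0$, which holds for the conformal representative chosen'' — but that is not what the paper uses, and it is not obviously true: the paper's computation shows $\tilde R_0 = -2\tilde K_0^2 \Phi_0^{-6} + 6 r^2 (d\log\Phi_0)^2$, which has a manifestly \emph{negative} piece. The correct mechanism is more delicate and relies precisely on the constraint. Starting from the conformal transformation law of scalar curvature and inserting the Lichnerowicz equation $\Delta_{\tilde h}\log\Phi_0 = -\tfrac12\Phi_0^{-6}\tilde K_{ab}\tilde K^{ab}$, one finds that the dangerous negative term in $\tilde R_0/6$ is only partially cancelled by $5\tilde K_0^2/(6\Phi_0^6)$, leaving $\alpha = \tfrac12 \tilde K_0^2\Phi_0^{-6} + r^2(d\log\Phi_0)^2 \geq 0$. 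Without performing this cancellation you have no control on the sign of the potential, and the maximum-principle/Liouville argument you sketch for injectivity of $-\Delta_4 + V$ has no foundation. This is a genuine gap.

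Second, a smaller but real inaccuracy: you assert ``$\Phi_0$ is bounded below away from zero on $\Sigma$,'' but in these coordinates $\Phi_0 \sim r^{-1}$ as $r\to 0$, so $\Phi_0$ is unbounded near the cylindrical end and certainly not constant; what is bounded below by a positive constant is $r\Phi_0 \geq (ab\mu)^{1/4}$ (Lemma C.1 in the paper). This distinction matters, since the nonlinear term in $\mathcal{E}$ involves $\tilde K_0^2/(\Phi_0+u)^5 \sim r^{-6}\cdot r^{5}/(r(\Phi_0+u))^5$, and the well-definedness and $C^1$-estimates all hinge on controlling $r(\Phi_0+u)$ from below on a suitable ball in $W'^{2,3}_{-1}$ using the $C'^0_{-1}$ embedding. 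You would need to restrict $\mathcal{E}$ to an explicit neighbourhood $\{u : \|u\|_{W'^{2,3}_{-1}}<\xi\}$ with $\xi$ chosen so this lower bound survives — the paper does exactly this — and then actually run the inequalities; your proposal stops at the observation that compact support of the perturbation data should make things work. Finally, your appeal to ``compact — or at least relatively small — perturbation'' for the Fredholm step is weaker than the paper's argument, which shows $\alpha\in L'^3_{-3}$ directly and cites the Fredholm-index-zero result for $\Delta_\gamma - \alpha$ in these weighted spaces, then closes with injectivity from the sign of $\alpha$.
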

\begin{remark} 
Here $\Gamma\equiv\{\rho=0\}$ is the axis on which the  Killing part of \eqref{htild}  becomes degenerate (i.e. at least one combination of $\partial/\partial \psi$ and $\partial / \partial \phi$ vanishes).
\end{remark}
\noindent Here $ W^{'2,3}_{-1}$ is one of Bartnik's weighted Sobolev spaces (appendix \ref{appA}). This space is consistent with the desired fall-off conditions of the solution $u$ at the cylindrical end and asymptotically flat end. Moreover, we do not expect $u$ to be regular at the origin. By section \ref{sec2} we know the angular momenta are equal to the difference to potentials evalauted on the endpoints of the axis (parameterized here by $\theta$). Therefore, with the requirement that $\omega_1,\omega_2\in C^{\infty}_c(\mathbb{R}^4\backslash \Gamma)$ the angular momenta are preserved by the family of deformations, which implies part 2 of Theorem \ref{maintheorem}.
\subsection{Proof of Lemma }
The main tool we use to establish the Lemma is the implicit function theorem (see Appendix B of  \cite{Dain:2010uh}).  The argument closely parallels that given in \cite{Dain:2010uh} and proceeds as follows.  Firstly, we select appropriate Banach spaces $X$,$Y$, and $Z$ as required for the implicit function theorem. Then we find neighbourhoods $O_x\subset X$ and $O_y\subset Y$ for which the map $\mathcal{E}:O_x\times O_y\rightarrow Z$ is well-defined. Care must be given to select Banach spaces that satisfy the fall-off conditions on the functions $U$, $\sigma_{ij}$, $\Phi_0$, $\omega_{\varphi}$, and $\omega_{\psi}$ at infinity and singular behavior at the origin of the function  $\Phi_0$. Since the solution need not be regular at the origin  (we are working on $\mathbb{R}^4 - \{0\}$) we cannot select the standard weighted Sobolev spaces $W^{2,3}_{-1}$. To begin we verify that $\mathcal{E}:O_x\times O_y\rightarrow Z$ is $C^1$ .
Next we show that  $D_2\mathcal{E}(0,0)$ (which is defined in equation (\ref{D2E})) is an isomorphism between $Y$ and $Z$.  The implicit function theorem is then used to conclude the existence of a uqniue $u$ with the properties of the lemma. 
\subsubsection{$\mathcal{E}$ is well-defined}
We choose $X=\mathbb{R}$, $Y=W^{'2,3}_{-1}$ and $Z=L^{'3}_{-3}$. Moreover, we choose $O_x=\mathbb{R}$ and $O_y=\{u\in W^{'2,3}_{-1}: \norm{u}_{W^{'2,3}_{-1}}<\xi\}$ where $\xi$ is computed as follows: by the inequality in Lemma \ref{lem4}-3 for $u\in O_y$ we have
\begin{equation}
r|u|\leq C_0\xi.\label{42}
\end{equation}
where $C_0$ is a constant. Also by lemma \ref{lem3},  we have
\begin{equation}
r\Phi_0\geq (ab\mu)^{1/4}.\label{41}
\end{equation}
Then , if we choose $\xi$ such that 
\begin{equation}
\frac{(ab\mu)^{1/4}}{C_0}>\xi>0\, ,
\end{equation} then for all $u\in O_y$ we will have
\begin{equation}
0<(ab\mu)^{1/4}-C_0\xi\leq r\left(\Phi_0+u\right).\label{38}
\end{equation} 
First we prove that $\mathcal{E}:\mathbb{R}\times O_y\rightarrow L^{'3}_{-3}$ is well-defined. That is, we need to show for $\lambda\in\mathbb{R}$ and $u\in O_y$ we have $\mathcal{E}(\lambda,u)\in L^{'3}_{-3}$. By using the triangle inequality for equation (\ref{36}), we have
\begin{eqnarray}
\norm{ \mathcal{E}(\lambda,u)}_{L^{'3}_{-3}}&\leq& \unb{\norm{\Delta_4 u}_{L^{'3}_{-3}}}_{I}+\unb{\norm{\Delta_4\Phi_0}_{L^{'3}_{-3}}}_{II}+\frac{1}{6} \unb{\norm{\tilde{R}_{\lambda}(\Phi_0+u)}_{L^{'3}_{-3}}}_{III}+\unb{\norm{\frac{\tilde{K}^{2}_{\lambda}}{6(\Phi_0+u)^5}}_{L^{'3}_{-3}}}_{IV}\nonumber\\
\end{eqnarray}
We will show each of these terms are bounded in $L^{'3}_{-3}$. To show this we will need the required properties of the functions ${\omega}_1, {\omega}_2, {U}_1$  and $\bar{\sigma}_{ij}$, as well as the particular fall-off conditions on functions (i.e $U,\sigma'_{ij}$) of the conformal Myers-Perry metric. \\
(I) Since $u\in O_y$
\begin{eqnarray}
\norm{\Delta_4 u}_{L^{'3}_{-3}}\leq\norm{u}_{W^{'2,3}_{-1}}\leq C
\end{eqnarray}
where $C$ is function of $a$ and $b$. Henceforth, the notation $C$ is a constant related only on metric parameters, i.e.  $a$ and $b$.\\
(II) In second term we use the bound on the Laplace operator lemma C.1-\ref{10}:
\begin{equation}
\norm{\Delta_4\Phi_0}_{L^{'3}_{-3}}\leq \norm{\frac{C}{r^6}}_{L^{'3}_{-3}}\leq C
\end{equation}
Finally, since $\omega_1$ and $\omega_2$ have compact support outside the axis and ${U}_1$ and $\bar{\sigma}_{ij}$ have compact support outside the origin, and by using (\ref{38}) and lemma \ref{lem5} one can show that (III) and (IV)) are bounded. The details are tedious and we omit them here.  
Thus $\mathcal{E}:\mathbb{R}\times O_y\rightarrow L^{'3}_{-3}$ is well-defined.
\subsubsection{$\mathcal{E}$ is $C^1$ }
We denote by $D_1\mathcal{E}(\lambda,u)$ the partial Fr\'echet derivative of $\mathcal{E}$ with respect to the first argument evaluated at $(\lambda,u)$ and by $D_2\mathcal{E}(\lambda,u)$  the partial Fr\'echet derivative of $\mathcal{E}$ with respect to the second argument $u$. These operators are formally obtained by directional derivatives of $\mathcal{E}$ and they are linear operators between the following spaces:
\begin{gather}
D_1\mathcal{E}(\lambda,u):\mathbb{R}\rightarrow L^{'3}_{-3},\\
D_2\mathcal{E}(\lambda,u):W^{'2,3}_{-1}\rightarrow L^{'3}_{-3}.
\end{gather} 
We use the notation $D_1\mathcal{E}(\lambda,u)[\zeta]\in L^{'3}_{-3}$ to denote the operator $D_1\mathcal{E}(\lambda,u)$ acting on $\zeta\in\mathbb{R}$. Similarly, $D_2\mathcal{E}(\lambda,u)[v]\in L^{'3}_{-3}$ denotes the operator $D_2\mathcal{E}(\lambda,u)$ acting on $v\in W^{'2,3}_{-1}$. These linear operators will be
\begin{eqnarray}
D_1\mathcal{E}(\lambda,u)[\zeta]&=&\frac{d}{dt}\mathcal{E}(\lambda+t\zeta,u)\rvert_{t=0}=\frac{1}{6}\Bigg(-D_1\tilde{R}_{\lambda}(\Phi_0+u)+\frac{D_1\tilde{K}^{2}_{\lambda}}{(\Phi_0+u)^5}\Bigg)\zeta,\nonumber
\end{eqnarray}
\begin{eqnarray}
D_2\mathcal{E}(\lambda,u)[v]&=&\frac{d}{dt}\mathcal{E}(\lambda,u+tv)\rvert_{t=0}=\Delta_4 v-\frac{1}{6}\Bigg(\tilde{R}_{\lambda}+\frac{5\tilde{K}^{2}_{\lambda}}{(\Phi_0+u)^6}\Bigg)v\label{D2E}
\end{eqnarray} 
Now, we will prove that the map $\mathcal{E}:\mathbb{R}\times O_y\rightarrow L^{'3}_{-3}$ is $C^1$. As a result of the properties of functions of the metric, we cannot use the chain rule. Alternatively, we will show that:
\begin{enumerate}
\item The linear operator $D_1\mathcal{E}(\lambda,u)[\zeta]$ and $D_2\mathcal{E}(\lambda,u)[v]$ are bounded. i.e.
\begin{gather}
\norm{D_1\mathcal{E}(\lambda,u)[\zeta]}_{L^{'3}_{-3}}\leq C|\zeta|,\label{9}\\
\norm{D_2\mathcal{E}(\lambda,u)[v]}_{L^{'3}_{-3}}\leq C\norm{v}_{W^{'2,3}_{-1}}.
\end{gather}
\item The linear operator $D_1\mathcal{E}(\lambda,u)[\zeta]$ and $D_2\mathcal{E}(\lambda,u)[v]$ are continuous in $(\lambda,u)$ in the operator norms. That is, for every $\epsilon>0$ there exists $\delta>0$ such that
\begin{gather}
|\lambda_1-\lambda_2|<\delta\Longrightarrow \norm{D_1\mathcal{E}(\lambda_1,u)-D_1\mathcal{E}(\lambda_2,u)}_{B(X,Z)}<\epsilon,\\
\norm{u_1-u_2}_{W^{'2,3}_{-1}}<\delta\Longrightarrow \norm{D_2\mathcal{E}(\lambda,u_1)-D_2\mathcal{E}(\lambda,u_2)}_{B(Y,Z)}<\epsilon.
\end{gather}
\item The operators $D_1\mathcal{E}(\lambda,u)[\zeta]$ and $D_2\mathcal{E}(\lambda,u)[v]$ are the partial Fr\'echet derivatives of $\mathcal{E}$. That is
\begin{gather}
\lim_{\zeta\rightarrow 0}\frac{\norm{\mathcal{E}(\lambda+\zeta,u)-\mathcal{E}(\lambda,u)-D_1\mathcal{E}(\lambda,u)[\zeta]}_{L^{'3}_{-3}}}{|\zeta|}=0,\label{30}\\
\lim_{v\rightarrow 0}\frac{\norm{\mathcal{E}(\lambda,u+v)-\mathcal{E}(\lambda,u)-D_2\mathcal{E}(\lambda,u)[v]}_{L^{'3}_{-3}}}{\norm{v}_{W^{'2,3}_{-1}}}=0.\label{31}
\end{gather} 
\end{enumerate}
1. To prove inequality (\ref{9}) we use triangle inequality, lemma \ref{lem5}, and inequality (\ref{38}) then
\begin{eqnarray}
\norm{D_1\mathcal{E}(\lambda,u)[\zeta]}_{L^{'3}_{-3}}&\leq&\frac{\abs{\zeta}}{6}\norm{D_1\tilde{R}_{\lambda}(\Phi_0+u)}_{L^{'3}_{-3}}+\frac{\abs{\zeta}}{6}\norm{\frac{D_1\tilde{K}^{2}_{\lambda}}{(\Phi_0+u)^5}}_{L^{'3}_{-3}}\nonumber\\
&\leq& C \abs{\zeta}
\end{eqnarray}
similarly, by definition of $O_y$ and lemma (\ref{lem5}) we have
\begin{eqnarray}
\norm{D_2\mathcal{E}(\lambda,u)[v]}_{L^{'3}_{-3}}&\leq&\norm{\Delta_4 v}_{L^{'3}_{-3}}+\frac{1}{6}\norm{\tilde{R}_{\lambda}v}_{L^{'3}_{-3}}
+\norm{\frac{5\tilde{K}^{2}_{\lambda}}{6(\Phi_0+u)^6}v}_{L^{'3}_{-3}}\nonumber\\
&\leq&C\norm{v}_{W^{'2,3}_{-1}}.
\end{eqnarray}
2. To show $D_1\mathcal{E}(\lambda,u)$ is continuous (it is in fact uniformly continuous), we use the triangle inequality, inequality (\ref{38}), and lemma (\ref{lem5}). Then  
\begin{eqnarray}
\norm{D_1\mathcal{E}(\lambda_1,u)-D_1\mathcal{E}(\lambda_2,u)}_{L^{'3}_{-3}}&\leq&\frac{1}{6}\norm{(D_1\tilde{R}_{\lambda_1}-D_1\tilde{R}_{\lambda_2})(\Phi_0+u)}_{L^{'3}_{-3}}\nonumber\\
&+&\norm{\frac{D_1\tilde{K}^{2}_{\lambda_1}-D_1\tilde{K}^{2}_{\lambda_2}}{6(\Phi_0+u)^5}}_{L^{'3}_{-3}}\nonumber\\
&\leq& C\abs{\lambda_1-\lambda_2}.
\end{eqnarray}
To prove continuity in $u$ consider the following identity for arbitrary $x$, $y$ and integer $p$:
\begin{equation}
\frac{1}{x^p}-\frac{1}{y^p}=(y-x)\sum_{i=0}^{p-1}x^{i-p}y^{-1-i}.\label{40}
\end{equation}
Then
\begin{equation}
r^{-7}\left(\frac{1}{\left(\Phi_0+u_1\right)^6}-\frac{1}{\left(\Phi_0+u_2\right)^6}\right)=\left(u_2-u_1\right)M.
\end{equation}
where
\begin{equation}
M=\sum_{i=0}^{5}\left(r\left(u+\Phi_0\right)\right)^{i-6}\left(r\Phi_0\right)^{-1-i}.
\end{equation}
Since $u_1,u_2\in O_y$, and using the lower bound in equation (\ref{38}) we have
\begin{equation}
M\leq C.\label{45}
\end{equation}
Then by (\ref{45}) and Lemma \ref{lem3}-\ref{44} we have 
\begin{eqnarray}
\norm{D_2\mathcal{E}(\lambda,u_1)[v]-D_2\mathcal{E}(\lambda,u_2)[v]}_{L^{'3}_{-3}}&=&\norm{v\frac{5\tilde{K}^{2}_{\lambda}}{6(\Phi_0+u_1)^6}-v\frac{5\tilde{K}^{2}_{\lambda}}{6(\Phi_0+u_2)^6}}_{L^{'3}_{-3}}\nonumber\\
&\leq&C\norm{\frac{\left(u_1-u_2\right)v}{r}}_{L^{'3}_{-3}}
\end{eqnarray}
The right hand side of the above equation can be bounded as follows: (we write $\td x$ to represent the volume element for the Euclidean metric on $\mathbb{R}^4 \backslash \{ 0 \} $)
\begin{eqnarray}
\norm{\frac{\left(u_1-u_2\right)v}{r}}_{L^{'3}_{-3}}&=&\left(\int_{\mathbb{R}^4 \backslash \{0\}}\frac{\left(u_1-u_2\right)^3v^3}{r^3}r^5 dx\right)^{1/3}\nonumber\\
&=&\left(\int_{\mathbb{R}^4 \backslash \{0\}}\frac{\left(u_1-u_2\right)^3\left(rv\right)^3}{r} \td x\right)^{1/3}\nonumber\\
&\leq& C\norm{v}_{W^{'2,3}_{-1}}\left(\int_{\mathbb{R}^4 \backslash \{0\}}\frac{\left(u_1-u_2\right)^3}{r} \td x\right)^{1/3} \nonumber\\
&\leq& C\norm{v}_{W^{'2,3}_{-1}}\norm{u_1-u_2}_{W^{'2,3}_{-1}}.\label{in2}
\end{eqnarray}
The first inequality follows from Lemma \ref{lem4} and the second inequality from the definition of Sobolev norms. Therefore, we have 
\begin{equation}
\norm{D_2\mathcal{E}(\lambda,u_1)[v]-D_2G(\lambda,u_2)[v]}_{L^{'3}_{-3}}\leq  C\norm{v}_{W^{'2,3}_{-1}}\norm{u_1-u_2}_{W^{'2,3}_{-1}}.
\end{equation}
Thus, $D_2G(\lambda,u)$ is a continuous operator.\\
3.  Equation (\ref{30}) is straightforward to prove. We prove (\ref{31}) as follows
\begin{eqnarray}
\mathcal{E}(\lambda,u+v)-\mathcal{E}(\lambda,u)-D_2\mathcal{E}(\lambda,u)[v]&=& \frac{\tilde{K}^{2}_{\lambda}}{6}\left(\frac{1}{\left(\Phi_0+u+v\right)^5}-\frac{1}{\left(\Phi_0+u\right)^5}+\frac{5v}{\left(\Phi_0+u\right)^6}\right)\nonumber
\end{eqnarray}
By simplifying we have
\begin{equation}
r^{-7}\left(\frac{1}{\left(\Phi_0+u+v\right)^5}-\frac{1}{\left(\Phi_0+u\right)^5}+\frac{5v}{\left(\Phi_0+u\right)^6}\right)=v^2 M_1.
\end{equation}
where 
\begin{equation}
M_1=\frac{1}{\left(r\left(\Phi_0+u+v\right)\right)^5\left(r\left(\Phi_0+u\right)\right)^6}\sum_{\substack{{i+j+k=4}\\
\forall i,j,k\geq 0}}C_{ijk}\left(r\Phi_0\right)^{i}\left(ru\right)^{j}\left(rv\right)^{k}.
\end{equation}
Where $C_{ijk}$ are numerical constants. To find the bound of $M_1$ we will use equation (\ref{42}) and the fact that $u,v\in V$. Then we have
\begin{equation}
\abs{M_1}\leq C \frac{(r^2+ab+b^2)(r^2+ab+a^2)+\mu^2}{\left(\left[(r^2+ab+b^2)(r^2+ab+a^2)+\mu^2\right]^{1/4}-C_0\xi\right)^{11}}\leq C.\label{11}
\end{equation}
Then by lemma \ref{lem5} and above inequality we have
\begin{eqnarray}
\norm{\mathcal{E}(\lambda,u+v)-\mathcal{E}(\lambda,u)-D_2\mathcal{E}(\lambda,u)[v]}_{L^{'3}_{-3}}&\leq& C\norm{\frac{v^2M_1}{r}}_{L^{'3}_{-3}}\leq C\norm{v}^2_{W^{'2,3}_{-1}}.
\end{eqnarray}
By steps similar to \eqref{in2} we have second inequality. Hence, we have proved statements (1),(2), and (3) and $\mathcal{E}(\lambda,u):\mathbb{R}\times O_y\rightarrow L^{'3}_{-3}$ is $C^1$.
\subsubsection{$D_2\mathcal{E}(0,0)$ is an isomorphism}
We now verify $D_2\mathcal{E}(0,0):W^{'2,3}_{-1}\rightarrow L^{'3}_{-3}$ is an isomorphism. We write this linear operator as
\begin{equation}
D_2\mathcal{E}(0,0)[v]=\Delta_4 v-\alpha v\label{43}
\end{equation}
where
\begin{equation}
\alpha=\frac{\tilde{R}_{0}}{6}+\frac{5\tilde{K}^{2}_{0}}{6\Phi_0^6}\label{alpha}
\end{equation}
An important property of the function $\alpha$ by lemma \ref{alphalem} is a nonnegative bounded function in $\mathbb{R}^4 \backslash \{0\}$, that is $\alpha=hr^{-6}$ where $h\geq 0$.  Therefore $\alpha \in L'^3_{-3}$.  Hence, as shown in Appendix A, when $M=\mathbb{R}^4 \backslash\{0\}$ and $p=3, \delta = -1$, the map $\Delta_{4}-\alpha$ is an isomorphism from $W^{'2,3}_{-1}\rightarrow L^{'3}_{-3}$.

\section{Discussion}
We have constructed a one-parameter family of initial data to the vacuum Einstein's equations with the same symmetries and asymptotic behaviour as initial data for the extreme Myers-Perry black hole in five dimensions. In particular this data have the same angular momenta $(J_1,J_2)$.  Such initial data will generically have a non-stationary evolution and is a starting point to investigate the dynamics near extremality for such black holes.  Our results generalize the analogous results concerning initial data `close' to extreme Kerr data \cite{Dain:2010uh}.  An important property of this three-dimensional initial data is that they had strictly greater energy than extreme Kerr. This is a  consequence of Dain's mass-angular momentum inequality, valid in axisymmetry: $M \geq \sqrt{J}$, for which the initial data for extreme Kerr is the unique minimizer that saturates the bound \cite{dain2006proof,dain2008proof,dain2012geometric}.  In our case, however,  in the absence of geometric inequalities we cannot conclude that the energy of the family of initial data discussed is strictly greater than that of extreme Myers-Perry.  Noting that the mass of Myers-Perry black holes satisfy the bound
\begin{equation}
M^3 \geq \frac{27\pi}{32}\left(|J_1| + |J_2|\right)^2
\end{equation} with equality in the extreme case, it would be tempting to conjecture Dain's inequality admits a generalization to four-dimensional biaxisymmetric initial data.  Proving that the extreme Myers-Perry initial data is local minimizer of energy amongst the class of initial data we have considered here would be a useful first step towards establishing an analogue of Dain's global result.  Note, however, that the energy of an extreme black ring \cite{Pomeransky:2006bd} satisfies 
\begin{equation}
M^3 = \frac{27\pi}{4} |J_1| (|J_2| - |J_1|)
\end{equation} This suggests a more complicated geometric inequality in five dimensions, which takes into account which combination of rotational Killing fields have fixed points in the interior of the initial data.   We hope to address these issues in future work. 

The method used here to find solutions of the constraint equations relied on the ability to generate initial data from the specification of scalar functions and reduce the problem to a single scalar PDE. In particular, the assumption of `$t-\phi^i$' symmetry allows one to determine the extrinsic curvature completely from the twist potentials. The existence of these potentials in turn relied on the existence of $U(1)^2$ isometry. It is clear that this technique would work in spacetime dimensions $D>5$, provided one assumes $U(1)^{D-3}$ isometry. Of course, this is too much Abelian symmetry to describe an asymptotically flat black hole for $D>5$. However, in certain limits extra Abelian symmetry may arise. For example, for higher-dimensional  Myers-Perry black holes, one may take an `ultraspinning limit' which enhances the number of commuting isometries (the limit changes the black hole horizon from $S^{D-2}$ to  $S^p \times \mathbb{T}^q$ for appropriately chosen integers $(p,q)$) \cite{Figueras:2008qh}.  It is known that \emph{non-extremal} black holes with a single non-zero angular momentum admit a linearized gravitational instability in the ultraspinning limit \cite{Dias:2010maa}. It might be interesting to investigate ultraspinning instabilities of \emph{extremal} black holes in $D>5$ using the formalism described here.  The initial data under consideration would have, in addition to a cylindrical end, an asymptotically Kaluza-Klein end, rather than an asymptotically flat one. 

\section*{Acknowledgements} 
AA is supported by a graduate scholarship from Memorial University. HKK is supported by an NSERC Discovery Grant. We would like to thank Ivan Booth and Chris Radford for useful comments and discussions. We also especially thank Sergio Dain and Eugenia Gabach-Clement for reading a draft of the manuscript and for a number of helpful suggestions and comments. This research was supported in part by Perimeter Institute for Theoretical Physics. Research at Perimeter Institute is supported by the Government of Canada through Industry Canada and by the Province of Ontario through the Ministry of Economic Development and Innovation. 
 
\appendix
\section{Asymptotically Euclidean manifolds}\label{appA}
A precise mathematical formalism to describe the asymptotic behaviour of functions on a space is the theory of weighted Sobolev spaces.  Here we use Bartnik's weighted Sobolev space \cite{bartnik1986mass,maxwell2005solutions} which is appropriate for Riemannian manifolds with asymptotically Euclidean and cylindrical ends. The weight function is $r=\abs{x}$ for $x\in \mathbb{R}^n$. Then for any $\delta\in \mathbb{R}$, $1\leq p<\infty$,  Bartnik's weighted Sobolev space $W^{'k,p}_{\delta}$ is the subset of $W^{'k,p}_{\text{loc}}$ for which the norm 
\begin{equation}
\norm{u}_{W^{'k,p}_{\delta}}=\sum_{j=0}^k\norm{\partial^ju}_{L^{'p}_{\delta-j}},\quad\text{where}\quad\norm{u}_{L^{'p}_{\delta}}=\Bigg(\int_{\mathbb{R}^n\backslash \{0\}}\abs{u}^pr^{-\delta p-n} \td x \Bigg)^{1/p}
\end{equation}
is finite.  Relevant properties of this weighted Sobolev space are summarized in the following lemma \cite{bartnik1986mass,maxwell2005solutions,Dain:2010uh}
\begin{lemma}\label{lem4} $\phantom{4}$:
\begin{enumerate}
\item If $p\leq q$ and $\delta_1<\delta_2$ then $L^{'p}_{\delta_1}\subset L^{'q}_{\delta_2}$ and the inclusion is continuous.
\item For $k\geq 1$ and $\delta_1<\delta_2$ the inclusion $W^{'k,p}_{\delta_1}\subset W^{'k-1,p}_{\delta_2}$ is compact.
\item If $1/p<k/n$ then $W^{'k,p}_{\delta}\subset C^{'0}_{\delta}$. The inclusion is continuous. That is if $u \in W^{' k,p}_{\delta}$ then $r^{-\delta}\abs{u} \leq C \norm{u}_{W^{'k,p}_{\delta}}$. Further, as proved in \cite{Dain:2010uh}, $\lim_{r\to 0} r^{-\delta} \abs{u} = \lim _{r \to \infty} r^{-\delta} \abs{u} =0$. 
\end{enumerate}
\end{lemma}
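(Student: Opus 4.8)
The plan is to establish the three properties of Bartnik's weighted Sobolev spaces $W'^{k,p}_\delta$ directly from the definition of the norms, reducing each statement to a standard unweighted Sobolev estimate by means of a dyadic decomposition of $\mathbb{R}^n \backslash \{0\}$ into annuli $A_j = \{ 2^j \leq |x| \leq 2^{j+1} \}$ for $j \in \mathbb{Z}$ together with a fixed rescaling. The essential point is that on each annulus $A_j$ the weight $r^{-\delta p - n}$ is comparable to the single number $2^{-j(\delta p + n)}$, so that up to a constant the weighted norm over $A_j$ equals $2^{-j\delta}$ times the ordinary $L^p$ (resp. $W^{k,p}$) norm over $A_j$; moreover pulling back to the unit annulus $A_0$ via $x \mapsto 2^j x$ makes the constants in any unweighted Sobolev inequality \emph{independent} of $j$, since the relevant inequalities on $A_0$ are scale-covariant with weights that exactly track the powers of $2^j$ already present. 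With this setup the proofs become bookkeeping of the $j$-sums.

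For part (1): I would write $\norm{u}^q_{L'^q_{\delta_2}} = \sum_j \int_{A_j} |u|^q r^{-\delta_2 q - n}\,dx$, bound the integrand on $A_j$ using $r \sim 2^j$, then on each $A_j$ apply H\"older's inequality (valid since $p \leq q$ and $A_j$ has finite measure $\sim 2^{jn}$) to get $\left(\int_{A_j} |u|^q\right)^{1/q} \leq C\, 2^{jn(1/q - 1/p)} \left(\int_{A_j} |u|^p\right)^{1/p}$. Combining the powers of $2^j$ shows the $A_j$-contribution to $\norm{u}_{L'^q_{\delta_2}}$ is bounded by $2^{j(\delta_1 - \delta_2)}$ times the $A_j$-contribution to $\norm{u}_{L'^p_{\delta_1}}$; since $\delta_1 < \delta_2$ the factor $2^{j(\delta_1 - \delta_2)}$ is bounded uniformly in $j$ (it is at most $1$ when we split the sum at $j=0$, with a symmetric treatment of the two tails), and an application of the $\ell^q \hookrightarrow \ell^p$-type inequality on the sequence of annulus-norms — or simply $\sup_j \leq \norm{\cdot}_{\ell^p}$ together with monotone rearrangement — closes it. For part (3), the Sobolev embedding $W^{k,p}(A_0) \hookrightarrow C^0(\overline{A_0})$ holds when $kp > n$, i.e. $1/p < k/n$; applying it on $A_j$ after rescaling gives $\sup_{A_j} |u| \leq C\, 2^{-jn/p} \norm{u}_{W^{k,p}(A_j)}$ with $C$ independent of $j$, and reinserting the weights yields exactly $r^{-\delta} |u| \leq C \norm{u}_{W'^{k,p}_\delta}$ pointwise on each $A_j$, hence everywhere. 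The limit statements $\lim_{r \to 0} r^{-\delta}|u| = \lim_{r\to\infty} r^{-\delta}|u| = 0$ follow because the tail sums $\sum_{|j| \geq J}$ of $\norm{u}^p_{W^{k,p}(A_j)}$ go to zero, so the $C^0_\delta$-norm of $u$ restricted to $|x| \leq 2^{-J}$ (resp. $|x| \geq 2^J$) tends to $0$; this is where I would cite the argument in \cite{Dain:2010uh}.

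For part (2), the compactness of $W'^{k,p}_{\delta_1} \hookrightarrow W'^{k-1,p}_{\delta_2}$ with $\delta_1 < \delta_2$, I would use the standard diagonal/truncation argument. Given a bounded sequence $\{u_m\}$ in $W'^{k,p}_{\delta_1}$: on any fixed large annulus $\bigcup_{|j| \leq J} A_j$ the Rellich--Kondrachov theorem gives a subsequence converging in $W^{k-1,p}$ there; a diagonal argument over $J \to \infty$ produces a subsequence converging in $W^{k-1,p}_{\mathrm{loc}}(\mathbb{R}^n\backslash\{0\})$ to some limit $u$. To upgrade to convergence in the weighted norm, I would control the tails: on $A_j$ the weighted $W'^{k-1,p}_{\delta_2}$-contribution of $u_m - u$ is $\leq C\, 2^{j(\delta_1 - \delta_2)}$ times its weighted $W'^{k,p}_{\delta_1}$-contribution (by the same annulus rescaling used for part (1), now comparing the $W^{k-1,p}$-norm to the $W^{k,p}$-norm trivially), and $2^{j(\delta_1-\delta_2)} \to 0$ as $j \to +\infty$ while for $j \to -\infty$ we instead lose derivatives but gain the favorable weight $2^{j(\delta_1 - \delta_2)} \to 0$ as well — wait, here one must be careful about the sign of $j$, so the correct statement is that $\sup_{|j| \geq J} 2^{j(\delta_1 - \delta_2)}$ need not be small; instead I would use that for the \emph{outer} tail $j \geq J$ one has $2^{j(\delta_1 - \delta_2)} \leq 2^{J(\delta_1-\delta_2)} \to 0$, and for the \emph{inner} tail $j \leq -J$ one uses $\delta_1 - \delta_2 < 0$ so $2^{j(\delta_1-\delta_2)} = 2^{|j|(\delta_2 - \delta_1)}$ grows — which means the inner tail must instead be controlled purely by the compact convergence on a slightly larger fixed region plus the uniform bound, choosing the region large enough that the weighted norm of the uniformly-bounded sequence on the complement is small because of the $L'^p$ integrability near $r=0$. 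The main obstacle is precisely this tail control near the origin in part (2): one must verify that the inclusion of spaces and the weights conspire so that mass cannot escape to $r = 0$, which is exactly why the strict inequality $\delta_1 < \delta_2$ is needed and why the problem is stated on $\mathbb{R}^n \backslash \{0\}$ rather than on $\mathbb{R}^n$; I expect the cleanest route is to invoke the Rellich--Kondrachov theorem on the full sequence of annuli with the already-established continuous embedding $W'^{k,p}_{\delta_1} \hookrightarrow W'^{k-1,p}_{\delta_1'}$ for an intermediate $\delta_1 < \delta_1' < \delta_2$ from part (1)-type reasoning, then combine with a standard abstract lemma (interpolation of a compact and a bounded embedding) to conclude. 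I would present parts (1) and (3) in full and sketch part (2) by reference to \cite{bartnik1986mass,maxwell2005solutions,Dain:2010uh} since these are established results reproduced here for the reader's convenience.
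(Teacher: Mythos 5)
Your outline is the right kind of proof for these embeddings and in spirit matches the standard dyadic argument (which is also how Bartnik and Maxwell establish them), but the paper itself gives no proof of Lemma A.1 at all --- it is stated as a catalogue of known facts with citations to \cite{bartnik1986mass,maxwell2005solutions,Dain:2010uh}. So there is nothing in the paper to compare against line-by-line; what I can do is flag where your sketch is off.

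The main issue is part (1). As you state it (and as the paper states it), the claimed inclusion $L'^p_{\delta_1}\subset L'^q_{\delta_2}$ with $p\le q$ and $\delta_1<\delta_2$ is wrong, and your proposed proof in fact exhibits the problem. After rescaling, the annulus $A_j$ contribution to $\norm{u}_{L'^q_{\delta_2}}$ is controlled by $2^{j(\delta_1-\delta_2)}$ times the $A_j$ contribution to $\norm{u}_{L'^p_{\delta_1}}$, but then you must sum: you have a bound of the form $\norm{u}^q_{L'^q_{\delta_2}}\lesssim\sum_j 2^{jq(\delta_1-\delta_2)} a_j^q$ where $a_j$ is the $j$-th weighted $L^p$ contribution, while $\norm{u}^p_{L'^p_{\delta_1}}=\sum_j a_j^p$. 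The factor $2^{j(\delta_1-\delta_2)}$ is small only for $j\to+\infty$; for $j\to-\infty$ it blows up. A function supported near the origin with $a_j$ barely summable in $\ell^p$ will then fail to lie in $L'^q_{\delta_2}$. The correct direction of the inclusion (as in Bartnik's Theorem 1.2 and Maxwell's Lemma 2.1) is the reverse of what you wrote for the $q>p$ case: one has $L'^q_{\delta_1}\subset L'^p_{\delta_2}$ when $p\le q$ and $\delta_1<\delta_2$, precisely because H\"older on $A_j$ goes from the larger exponent to the smaller one with a favorable volume factor, and because $\ell^p\subset\ell^q$ for $p\le q$ lets one pass from a sum of $a_j^q$ to a sum of $a_j^p$. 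In short, the H\"older step in your proof does not give the inequality you wrote (you need $q$ in the role of the large exponent on the left and $p$ on the right, not the other way), and the sign of the exponent of $2^j$ does not cooperate on both tails. This is actually a typo in the paper's statement that your blind proof attempt was poised to expose, had you carried the $j$-sum through carefully.

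Parts (2) and (3) are sound in outline. For part (3), your rescaling is the standard one; the only caveat is that the statement requires $1/p<k/n$, i.e.\ $kp>n$, precisely so Morrey embedding on the unit annulus applies, and you correctly identified that the decay of the tail sums gives the vanishing limits (this is the content of the appendix lemma in \cite{Dain:2010uh}). For part (2), your instinct to worry about the inner tail $j\to-\infty$ is exactly right, and you arrived at the correct resolution: the favorable factor controls the outer tail, while the inner tail is controlled because the total mass is bounded and the relevant weights make the contribution near the origin small relative to the bounded $W'^{k,p}_{\delta_1}$-norm once one uses a strictly intermediate weight; the cleanest route is indeed to split into Rellich--Kondrachov on a compact exhaustion plus a small-tail estimate, which is Bartnik's Lemma 2.1 and Maxwell's Lemma 2.2. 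Your wavering mid-paragraph about ``which tail is dangerous'' should be cleaned up, but the final resolution you reach is correct.

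Bottom line: the method is appropriate and two of the three parts are essentially right, but part (1) as written is false, your proof of it does not close, and this should be fixed by reversing the roles of $p$ and $q$ in the subscript (or equivalently swapping which space contains which).
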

\noindent Let $M$ be a smooth, connected, complete, $n$-dimensional Riemannian manifold $(M,\gamma)$, and let $\rho<0$. We say $(M,\gamma)$ is asymptotically Euclidean of class $W^{'k,p}_{\rho}$ if
\begin{itemize}
\item The metric $\gamma\in W^{'k,p}_{\rho}(M)$, where $1/p-k/n<0$ and $\gamma$ is continuous.
\item There exists a finite collection $\{N_i\}_{i=1}^m$ of open subsets of $M$ and diffeomorphisms
$\Phi_i:E_r\rightarrow N_i$ ($E_r=\mathbb{R}^n \backslash  \bar{B}_r(0)$) such that $M-\cup_iN_i$ is compact.
\item For each $i$, $\Phi^*_i\gamma-\bar{\gamma}\in W^{'k,p}_{\rho}(E_r)$
\end{itemize} 
We call the charts $\Phi_i$ end charts and the corresponding coordinates are end coordinates. Now, suppose $(M,\gamma)$ is asymptotically Euclidean, and let $\{\Phi_i\}_{i=1}^{m}$ be its collection of end charts. Let $K=M-\cup_i\Phi_i(E_{2r})$, so $K$ is a compact manifold. The weighted Sobolev space $W^{k,p}_{\delta}(M)$ is the subset of $W^{k,p}_{\text{loc}}(M)$ such that the norm 
\begin{equation}
\norm{u}_{W^{k,p}_{\delta}(M)}=\norm{u}_{W^{k,p}(K)}+\sum_i\norm{\Phi_i^*u}_{W^{k,p}_{\delta}(E_{r})}
\end{equation}
is finite. We can define similarly weighted Lebesgue space $L^{'p}_{\delta}(M)$ and $C^{'k}_{\delta}$ and $C^{'\infty}_{\delta}(M)=\cap_{k=0}^{\infty}C^{'k}_{\delta}(M)$. In the  particular case when $M=\mathbb{R}^n$, then we have just one asymptotically Euclidean end. Moreover, if $(M,\gamma)$ is an asymptotically Euclidean manifold of class $W^{'k,p}_{\rho}$, we say $(M,\gamma,K)$ is asymptotically Euclidean dataset if $K\in W^{'k-1,p}_{\rho-1}(M)$.\\

\noindent The main goal of this appendix is to consider the Poisson operator $\mathcal{L}=\Delta_{\gamma}-\alpha$ on scalar functions of an asymptotically Euclidean manifold and express a very classical result (\cite{mcowen1979behavior} or see \cite{maxwell2005solutions}), that is, $\mathcal{L}$ is an isomorphism from Sobolev space $W^{'2,p}_{\delta}$ to $L^{'p}_{\delta}$. We start with the estimate \cite{choquet1981elliptic,maxwell2005solutions,choquet2009general} 
\begin{lemma}\label{lem2}
Suppose $(M,\gamma)$ is asymptotically Euclidean of class $W^{'2,p}_{\rho}$, $p>\frac{n}{2}$, $\rho<0$. Then if $2-n<\delta<0$, $\delta'\in \mathbb{R}$, and $u\in W^{'2,p}_{\delta}$ we have
\begin{equation}
\norm{u}_{W^{'2,p}_{\delta}}\leq \norm{\mathcal{L}u}_{L^{'p}_{\delta-2}}+\norm{u}_{L^{'p}_{\delta'}}.
\end{equation}
\end{lemma}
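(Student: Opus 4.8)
The statement is the standard weighted a priori (semi--Fredholm) estimate for a Laplace--type operator on an asymptotically Euclidean manifold, and the plan is to reduce it to the classical interior $L^p$ elliptic estimate by a dyadic rescaling argument, following the approach of \cite{mcowen1979behavior,choquet1981elliptic,maxwell2005solutions}. First I would fix the geometry: write $M=K_0\cup\big(\bigcup_i N_i\big)$ with $K_0$ compact and each end $N_i$ identified through its end chart with $E_r=\mathbb{R}^n\setminus\bar B_r(0)$, on which $\gamma=\bar\gamma+e$ with $e\in W'^{2,p}_{\rho}$, $\rho<0$. The hypothesis $p>n/2$ is precisely what makes the local theory apply: on a fixed bounded annulus $W^{2,p}\hookrightarrow C^0$, so the principal coefficients $\gamma^{ij}$ of $\Delta_\gamma$ are continuous, while $\partial\gamma\in W^{1,p}\hookrightarrow L^q$ with $q=np/(n-p)>n$, which is enough for the first--order coefficients of $\Delta_\gamma$ and the zeroth--order coefficient $\alpha$ to enter the Calder\'on--Zygmund estimate as genuinely lower--order (``compact'') terms. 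Moreover, by part (3) of Lemma \ref{lem4}, $e\to 0$ at infinity, so on each end $\Delta_\gamma$ is uniformly elliptic with ellipticity constants tending to those of the flat Laplacian.

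For the core estimate I would cover each end by overlapping dyadic annuli $A_j=\{2^{j}\le|x|\le 2^{j+2}\}$, $j\ge j_0$, on which $r^s\simeq 2^{js}$ uniformly for every power $s$, and rescale to the fixed annulus $A_0=\{1\le|y|\le 4\}$ by $y=2^{-j}x$, $u_j(y)=u(2^jy)$. Under this dilation $\Delta_\gamma$ becomes $2^{-2j}$ times an operator whose principal part is the Laplacian of the metric $\gamma(2^j\cdot)$---uniformly elliptic with uniformly bounded coefficients for $j$ large---and whose lower--order coefficients are uniformly small, carrying positive powers of $2^{-j}$ in the first--order case and decaying because of the decay of $\alpha$ in the zeroth--order case. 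Applying the standard interior $W^{2,p}$ estimate on $A_0\subset A_0'=\{1/2\le|y|\le 8\}$, with a constant \emph{uniform in $j$} thanks to these uniform bounds, and undoing the scaling, I obtain the per--annulus estimate
\[
\norm{u}_{W'^{2,p}_{\delta}(A_j)}\le C\Big(\norm{\mathcal{L}u}_{L'^p_{\delta-2}(A_j')}+\norm{u}_{L'^p_{\delta}(A_j')}\Big),
\]
the powers of $2^j$ matching because each derivative costs a factor $2^j$, exactly the weight shift built into $W'^{k,p}_{\delta}$. Raising to the $p$--th power, summing over $j$, using the bounded overlap of the $A_j'$, and adjoining the ordinary interior elliptic estimate on the compact core $K_0$ (where weighted and unweighted norms are equivalent) yields the basic estimate $\norm{u}_{W'^{2,p}_{\delta}}\le C\big(\norm{\mathcal{L}u}_{L'^p_{\delta-2}}+\norm{u}_{L'^p_{\delta}}\big)$.

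It remains to replace the error term by $\norm{u}_{L'^p_{\delta'}}$ for the arbitrary weight $\delta'$ of the statement. The substantive case is $\delta'\ge\delta$ (for $\delta'<\delta$ the term is generically infinite on $W'^{2,p}_{\delta}$ and the estimate is vacuous), and this is where the restriction $2-n<\delta<0$ is essential: it places $\delta$ in the open interval bounded by the indicial roots $0$ and $2-n$ of the flat Laplacian, which contains no critical weight, so the model operator $\Delta$ on $\mathbb{R}^n$ has no homogeneous solution with the offending decay rate and $\mathcal{L}$ becomes Fredholm from $W'^{2,p}_{\delta}$ to $L'^p_{\delta-2}$. I would then argue by contradiction: were the estimate false there would be $u_k$ with $\norm{u_k}_{W'^{2,p}_{\delta}}=1$, $\norm{\mathcal{L}u_k}_{L'^p_{\delta-2}}\to 0$ and $\norm{u_k}_{L'^p_{\delta'}}\to 0$; passing to a weak limit, using the compact inclusions of part (2) of Lemma \ref{lem4} on compact sets together with the basic estimate to promote weak to strong local convergence, one finds a nonzero $u_\infty\in W'^{2,p}_{\delta}$ with $\mathcal{L}u_\infty=0$, and an asymptotic expansion at the Euclidean ends (separation into spherical harmonics, or a Kelvin--transform argument for the model operator, using $2-n<\delta<0$) forces $u_\infty\equiv 0$, a contradiction. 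I expect this last step---making the contradiction argument precise, in particular the end behaviour of kernel elements and the uniformity of the rescaled interior estimates---to be the main obstacle; everything preceding it is bookkeeping with the dilation and the weighted norms.
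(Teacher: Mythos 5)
The paper does not actually prove Lemma \ref{lem2}: it is quoted as a classical estimate with references \cite{choquet1981elliptic,maxwell2005solutions,choquet2009general}, so there is no internal proof to compare against. Your outline follows exactly the route of those references: interior Calder\'on--Zygmund estimates transferred to dyadic annuli by scaling (using $p>n/2$ so that the metric coefficients are continuous, the rescaled lower-order terms are small, and the constant is uniform in the annulus index), summation over the annuli plus the compact core to get the basic estimate with error $\norm{u}_{L^{'p}_{\delta}}$, and then an upgrade to the weaker error term $\norm{u}_{L^{'p}_{\delta'}}$ using the non-exceptional range $2-n<\delta<0$. Up to that last step your bookkeeping is correct.

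The one place your sketch misidentifies the difficulty is the final compactness-and-contradiction step. If $\norm{u_k}_{W^{'2,p}_{\delta}}=1$, $\norm{\mathcal{L}u_k}_{L^{'p}_{\delta-2}}\to 0$ and $\norm{u_k}_{L^{'p}_{\delta'}}\to 0$, then the local strong limit $u_\infty$ is forced to vanish on every compact set already by the third condition, so you never produce a ``nonzero $u_\infty\in\ker\mathcal{L}$'' to be excluded by an asymptotic expansion; indeed $\mathcal{L}=\Delta_\gamma-\alpha$ may well have nontrivial kernel in $W^{'2,p}_{\delta}$ (no sign condition on $\alpha$ is assumed in Lemma \ref{lem2}), and the estimate holds regardless because the error term is a genuine norm of $u$. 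The real issue is loss of compactness at infinity: you must show the sequence cannot concentrate its $L^{'p}_{\delta}$ mass on the ends. The standard way (Bartnik, McOwen, Maxwell) is quantitative rather than by classification of decaying harmonic functions: cut off near infinity with $\chi_R$, apply the isomorphism $\Delta:W^{'2,p}_{\delta}(\mathbb{R}^n)\to L^{'p}_{\delta-2}$ valid precisely for $2-n<\delta<0$ \cite{mcowen1979behavior} to $\chi_R u$, and absorb $(\Delta_\gamma-\Delta)u$ and $\alpha u$ using the decay of $\gamma-\bar\gamma\in W^{'2,p}_{\rho}$, $\rho<0$, and of $\alpha$, choosing $R$ large; the commutator term $[\Delta,\chi_R]u$ lives on a fixed annulus and is controlled by the local $L^{'p}_{\delta'}$ norm via interpolation and Lemma \ref{lem4}. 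With that replacement (either inside your contradiction argument, to kill the tail, or directly, bypassing the contradiction altogether) your proof is complete and coincides with the argument in the cited literature; also note the trivial case $\delta'\leq\delta$ follows at once from the basic estimate and the continuous inclusion $L^{'p}_{\delta'}\subset L^{'p}_{\delta}$ of Lemma \ref{lem4}, rather than being vacuous.
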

\noindent Now we have following weak maximum principle (Lemma 3.2 in \cite{maxwell2005solutions})
\begin{lemma}\label{lem1}
Suppose $(M,\gamma)$ is asymptotically Euclidean of class $W^{'k,p}_{\rho}$, $k\geq 2$, $k>\frac{n}{p}$, and suppose $\alpha\in W^{'k-2,p}_{\rho-2}$ and suppose $\alpha\geq 0$. If $u\in W^{'k,p}_{\text{loc}}$ satisfies
\begin{equation}
-\Delta_{\gamma} u+\alpha u\leq 0\label{55}
\end{equation}
and if $u^+\equiv \text{max}(u,0)$ is $o(1)$ on each end of $M$, then $u\leq 0$. In particular, if $u\in  W^{'k,p}_{\delta}$ for some $\delta<0$ and $u$ satisfies (\ref{55}), then $u\leq 0$.
\end{lemma}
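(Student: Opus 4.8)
The plan is to prove $u\le 0$ by a standard energy (Stampacchia-type) argument, but with one device that avoids the usual difficulty of non-compactness: rather than testing against a function multiplied by a cutoff near the ends — cutoffs whose gradients are too large to control when $n\ge 3$ — I would use the hypothesis $u^+=o(1)$ to manufacture a genuinely \emph{compactly supported} test function by shifting $u$ down by a positive constant.

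First I would record the weak form of the differential inequality: $-\Delta_\gamma u+\alpha u\le 0$ means that for every non-negative $\phi\in C^\infty_c(M)$,
\[
\int_M\left(\langle\nabla u,\nabla\phi\rangle_\gamma+\alpha\,u\,\phi\right)\td V_\gamma\le 0 .
\]
The hypotheses $k\ge 2$, $k>n/p$ and $\alpha\in W^{'k-2,p}_{\rho-2}$ ensure, via the local Sobolev embeddings ($W^{'k,p}_{\mathrm{loc}}\hookrightarrow C^0_{\mathrm{loc}}\cap W^{'1,2}_{\mathrm{loc}}$ for $n\ge 2$, and $\alpha u\in L^{1}_{\mathrm{loc}}$), that $u$ is continuous and all integrals are finite; a routine density argument then extends the admissible test class to all non-negative $\phi\in W^{'1,2}_{\mathrm{loc}}(M)$ of compact support.

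Next, fix $\epsilon>0$ and set $v:=\max(u-\epsilon,0)$. Since $u$ is continuous and $u^+=o(1)$ on each of the finitely many ends, there is a compact $\mathcal K\subset M$ (the complement of a union of far-out end regions) outside of which $u<\epsilon$; hence $\operatorname{supp}v\subseteq\{u\ge\epsilon\}\subseteq\mathcal K$ is compact, $v\ge 0$, $v\in W^{'1,2}_{\mathrm{loc}}(M)$, and $\nabla v$ equals $\nabla u$ a.e.\ on $\{u>\epsilon\}$ and vanishes elsewhere. Thus $v$ is an admissible test function, and
\[
0\ \ge\ \int_M\langle\nabla u,\nabla v\rangle_\gamma\,\td V_\gamma+\int_M\alpha\,u\,v\,\td V_\gamma
\ =\ \int_{\{u>\epsilon\}}\abs{\nabla u}_\gamma^{2}\,\td V_\gamma+\int_{\{u>\epsilon\}}\alpha\,u\,(u-\epsilon)\,\td V_\gamma .
\]
On $\{u>\epsilon\}$ one has $u>\epsilon>0$, $u-\epsilon>0$ and $\alpha\ge 0$, so the second integrand is non-negative; therefore both integrals on the right vanish. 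From $\int_M\abs{\nabla v}_\gamma^{2}\,\td V_\gamma=0$ we get $\nabla v\equiv 0$, so $v$ is constant on the connected manifold $M$, and since $v$ has compact support while $M$ is non-compact, $v\equiv 0$, i.e.\ $u\le\epsilon$ everywhere. Letting $\epsilon\downarrow 0$ gives $u\le 0$. For the final (``in particular'') assertion: if $u\in W^{'k,p}_\delta$ with $\delta<0$, then Lemma \ref{lem4} gives $r^{-\delta}\abs{u}\to 0$ on each end, hence $u^+=o(1)$ there, and the first part applies.

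The only points requiring care are: (i) checking that $\{u\ge\epsilon\}$ is compact — immediate from continuity together with the $o(1)$ decay on the finitely many ends; (ii) checking that $v$ is an admissible test function, which needs $u\in W^{'1,2}_{\mathrm{loc}}$ and stability of the weak inequality under non-negative $W^{'1,2}$-limits of $C^\infty_c$ functions — both standard given $k>n/p$; and (iii) the structural observation that the $\epsilon$-shift replaces the problematic cutoff-at-infinity by an honest compactly supported function, so the integration by parts produces no boundary terms. I therefore expect no essential obstacle, only this bookkeeping; the ``hard part'', such as it is, is recognizing that the $o(1)$ hypothesis is exactly what makes the shifted subsolution compactly supported.
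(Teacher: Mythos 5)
Your proposal is correct and follows essentially the same route as the paper: the paper's proof also fixes $\epsilon>0$, tests against $v=(u-\epsilon)^+$ (compactly supported because $u^+=o(1)$ on the ends), uses $\alpha\geq 0$ and integration by parts to force $\int_M\abs{\nabla v}^2=0$, concludes $v\equiv 0$, and sends $\epsilon\to 0$, handling the ``in particular'' clause via the embedding $W^{'k,p}_{\delta}\subset C^{'0}_{\delta}$. Your write-up merely makes a few of the bookkeeping points (weak formulation, continuity of $u$, non-compactness of $M$) more explicit.
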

\begin{proof}
Fix $\epsilon>0$, and let $v=(u-\epsilon)^+$. Since $u^+=o(1)$ on each end, we see $v$ is compactly supported. Moreover, since $u\in W^{'k,p}_{\text{loc}}$ we have from Sobolev embedding that $u\in W^{'1,2}_{\text{loc}}$ and hence $v\in W^{'1,2}$. Now,  
\begin{equation}
\int_M \left(-v\Delta_{\gamma} u+\alpha uv\right)\, \td x\leq 0\quad\Longrightarrow \quad\int_M -v\Delta_{\gamma} u\, \td x \leq -\int_M\alpha uv \, \td x\leq 0
\end{equation} where $\td x$ denotes the volume element on $(M,\gamma)$. 
Since $\alpha\geq 0$, $v\geq 0$ and u is positive wherever $v\neq 0$. Integrating by parts we have
\begin{equation}
\int_M\abs{\nabla v}^2\, dx\leq 0
\end{equation}
since $\nabla u=\nabla v$ on the support of $v$. So $v$ is constant and compactly supported, so it should be zero, i.e. $\text{max}(u-\epsilon,0)=0$. Then we conclude $u\leq \epsilon$. Sending $\epsilon$ to $0$ we have $u\leq 0$.\\
Now, if $u\in W^{'k,p}_{\delta}$, since $W^{'k,p}_{\delta}\subset C^{'0}_{\delta}$,, we have $u\in  C^{'0}_{\delta}$. Hence if $\delta<0$, then $u^+=o(1)$ and lemma can be applied to $u$. 
\end{proof}
Using this Lemma we can prove the following theorem.
\begin{thm}
Suppose $(M,\gamma)$ is asymptotically Euclidean of class $W^{'2,p}_{\rho}$, $p>\frac{n}{2}$. Then if $2-n<\delta<0$  and $\alpha \in L'^p_{\delta-2}$, the operator $\mathcal{L}:W^{'2,p}_{\delta}\rightarrow L^{'p}_{\delta-2}$  is Fredholm with index $0$. Moreover, if $\alpha\geq 0$ then $\mathcal{L}$ is an isomorphism.
\end{thm}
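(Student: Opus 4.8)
The plan is to first show that $\mathcal{L}:W^{'2,p}_{\delta}\to L^{'p}_{\delta-2}$ is Fredholm of index $0$, and then to promote this to an isomorphism using the weak maximum principle of Lemma~\ref{lem1}. For the Fredholm property I would start from the \emph{a priori} estimate of Lemma~\ref{lem2},
\[
\norm{u}_{W^{'2,p}_{\delta}}\leq C\left(\norm{\mathcal{L}u}_{L^{'p}_{\delta-2}}+\norm{u}_{L^{'p}_{\delta'}}\right),
\]
where $\delta'$ is chosen with $\delta<\delta'$. Since the embedding $W^{'2,p}_{\delta}\hookrightarrow L^{'p}_{\delta'}$ is then compact (parts 1 and 2 of Lemma~\ref{lem4}), the usual abstract argument --- an estimate $\norm{u}_{X}\le C\norm{Tu}_{Z}+C\norm{u}_{W}$ with $X\hookrightarrow W$ compact forces $T$ to have finite-dimensional kernel and closed range --- shows that $\mathcal{L}$ is semi-Fredholm.

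To compute the index I would realise $\mathcal{L}$ as a compact perturbation of the Laplacian. Multiplication by $\alpha$ is compact as a map $W^{'2,p}_{\delta}\to L^{'p}_{\delta-2}$: since $p>n/2$ we have $W^{'2,p}_{\delta}\hookrightarrow C^{'0}_{\delta}$ (Lemma~\ref{lem4}-3) and $L^{'p}_{\delta-2}\subset L^{'p}_{-2}$ (as $\delta<0$), so $u\mapsto\alpha u$ is bounded with norm controlled by $\norm{\alpha}_{L^{'p}_{-2}}$; approximating $\alpha$ in $L^{'p}_{-2}$ by functions supported in a fixed annulus bounded away from $0$ and $\infty$ and using the Rellich compactness of $W^{2,p}\hookrightarrow C^{0}$ on that annulus, the $\alpha$-multiplication is a norm limit of compact operators, hence compact. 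Therefore $\mathcal{L}=\Delta_{\gamma}-\alpha$ has the same index as $\Delta_{\gamma}:W^{'2,p}_{\delta}\to L^{'p}_{\delta-2}$, which is an isomorphism --- hence of index $0$ --- for the non-exceptional range $2-n<\delta<0$; this is the classical theorem of \cite{mcowen1979behavior} (see also \cite{maxwell2005solutions}). A self-contained alternative uses the duality $\left(L^{'p}_{\delta-2}\right)^{*}=L^{'p'}_{2-n-\delta}$ to identify $\operatorname{coker}\mathcal{L}$ with the kernel of the formally self-adjoint operator $\Delta_{\gamma}-\alpha$ on $W^{'2,p'}_{2-n-\delta}$ --- note the dual weight $2-n-\delta$ again lies in $(2-n,0)$ --- together with homotopy invariance of the index along $\mathcal{L}_{t}=\Delta_{\gamma}-t\alpha$ and along a deformation of $\gamma$ to the flat metric on the ends, where both kernel and cokernel vanish by the maximum principle.

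For the isomorphism statement, assume $\alpha\geq 0$ and let $u\in W^{'2,p}_{\delta}$ (with $\delta<0$) satisfy $\mathcal{L}u=0$. Then $-\Delta_{\gamma}u+\alpha u=0\le 0$, so Lemma~\ref{lem1} --- which applies precisely because $\alpha\geq 0$ and $u\in W^{'2,p}_{\delta}$ with $\delta<0$ --- gives $u\le 0$; applying it to $-u$ gives $u\geq 0$, whence $u\equiv 0$. Thus $\ker\mathcal{L}=\{0\}$, and since the index is $0$ the cokernel is also trivial, so $\mathcal{L}$ is a continuous linear bijection between Banach spaces and therefore an isomorphism by the open mapping theorem.

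The routine parts are the semi-Fredholm step and, once index $0$ is available, the passage to an isomorphism. The main obstacle is the index-$0$ claim itself: one must either invoke the McOwen--Maxwell theory of non-exceptional weights or carry out the duality-and-homotopy argument above, the key structural fact being that the interval $2-n<\delta<0$ is invariant under $\delta\mapsto 2-n-\delta$. A secondary technicality is verifying that multiplication by a potential assumed only to lie in $L^{'p}_{\delta-2}$ is genuinely a compact operator between the relevant weighted spaces.
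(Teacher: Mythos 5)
Your proof is correct, and its skeleton --- the a priori estimate of Lemma \ref{lem2} combined with the compact embedding from Lemma \ref{lem4} to obtain the (semi-)Fredholm property, then the weak maximum principle of Lemma \ref{lem1} applied to $u$ and $-u$ to kill the kernel when $\alpha\geq 0$ --- matches the paper's. Where you diverge is in how surjectivity, and more generally the index, is handled. The paper cites \cite{choquet1981elliptic} for Fredholmness and then proves surjectivity directly by a duality argument: it identifies the cokernel with the kernel of the adjoint acting on the dual weighted space (weight $2-n-\delta$, which again lies in $(2-n,0)$) and argues that the adjoint is injective; the index-$0$ property is never actually used in its argument. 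You instead establish index $0$ first, by showing that multiplication by $\alpha\in L^{'p}_{\delta-2}$ is a compact map $W^{'2,p}_{\delta}\to L^{'p}_{\delta-2}$ (via the embedding $W^{'2,p}_{\delta}\subset C^{'0}_{\delta}$ and approximation of $\alpha$ by functions supported in annuli), so that $\mathcal{L}$ is a compact perturbation of $\Delta_{\gamma}$, whose isomorphism (hence index-$0$) property on the non-exceptional range $2-n<\delta<0$ is the McOwen--Maxwell theorem \cite{mcowen1979behavior,maxwell2005solutions}; surjectivity then follows from injectivity plus index $0$ and the open mapping theorem. Both routes lean on the classical weighted elliptic theory at exactly one point, but your version has two advantages: it genuinely proves the ``Fredholm with index $0$'' half of the statement for general $\alpha\in L^{'p}_{\delta-2}$ (which in the paper rests essentially on the citation), and it bypasses the weakest step of the paper's proof, namely the inference that $\int_M \mathcal{L}(f_2)f_1\,\td x=0$ for all compactly supported $f_2$ forces $f_1=0$, which as written needs a density argument (or a maximum-principle argument on the dual weight, as in your alternative) that the paper does not supply. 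The price is the verification that multiplication by a potential known only to lie in $L^{'p}_{\delta-2}$ is compact between the weighted spaces, which you sketch adequately.
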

\begin{proof}
By the estimate in Lemma \ref{lem2} and \cite{choquet1981elliptic} this operator is Fredholm. Now we show $\mathcal{L}$ is injective. Let $\mathcal{L}u=0$ for $u\in W^{'2,p}_{\delta}$. Then by weak maximum principle we have $u=0$ on $M$ for $2-n<\delta<0$ and  $\mathcal{L}$ is injective. To show $\mathcal{L}$ is surjective, it suffices to show $\mathcal{L}^*$ is injective from $L^{'p}_{2-n-\delta}\rightarrow W^{'-2,p}_{-n-\delta}$. Now let $f_1$ and $f_2$ be smooth and compactly supported in each end of $M$. We have from integration by parts
\begin{eqnarray}
0=\left<f_2,\mathcal{L}^*(f_1)\right>=\left<\mathcal{L}(f_2),f_1\right>=\int_M \mathcal{L}(f_2)f_1\, \td x
\end{eqnarray}
Thus $\int_M \mathcal{L}(f_2)f_1\, \td x=0$ for all smooth and compactly supported $f_2$ in each end of $M$, then $f_1=0$ and $\mathcal{L}^*$ is injective. Then $\mathcal{L}$ is surjective. Therefore, $\mathcal{L}$ is an isomorphism.
\end{proof}

\section{Myers-Perry black hole initial data}\label{AppC}
In this Appendix we will give details on various properties of the initial data for the extreme Myers-Perry metric. We have used MAPLE  to simplify a number of our computations.  Our main interest is to find certain final bounds and since most of the calculations are similar, we only provide explicit details for a subset of cases. The slice metric can be written as
\begin{eqnarray}
h=\frac{\Sigma}{r^2}\left(dr^2
+ r^2 \td\theta^2\right)+\sigma_{ij}\td\phi^i\td\phi^j
\end{eqnarray}
where 
\begin{eqnarray}
\Sigma&=&r^2+ab+a^2\cos^2\theta+b^2\sin^2\theta\qquad\quad \sigma_{12}=\frac{ab\mu}{\Sigma}\sin^2\theta\cos^2\theta\\
\sigma_{11}&=&\frac{a^2\mu}{\Sigma}\sin^4\theta+(r^2+ab+a^2)\sin^2\theta\qquad 
\sigma_{22}=\frac{b^2\mu}{\Sigma}\cos^4\theta+(r^2+ab+b^2)\cos^2\theta\nonumber
\end{eqnarray}
where $\phi^1=\varphi$ and $\phi^2=\psi$. Now if we choose $\rho=\frac{1}{2}r^2\sin 2\theta$ and $z=\frac{1}{2}r^2\cos 2\theta$, then the conformal slice metric of  the extreme Myers-Perry black hole can be written 
\begin{eqnarray}
\tilde{h}=e^{2U}\left(\td\rho^2 + \td z^2\right)+\sigma'_{ij}\td\phi^i\td\phi^j
\end{eqnarray}
where
\begin{equation}\label{phi}
\Phi_0^2=\frac{\sqrt{\det\sigma}}{\rho}\qquad
\sigma'_{ij}=\Phi_0^{-2}\sigma_{ij}\qquad 
e^{2U}=\Phi_0^{-2}\frac{\Sigma}{r^4}
\end{equation}
In general, the lapse and shift vectors are degrees of freedom for the initial data set. But  since we want to preserve geometrical properties of the initial data under evolution, we compute the lapse of the extreme Myers-Perry spacetime and select the shift vector  to be the product of $r$ and the shift of extreme Myers-Perry metric. 
{\small
\begin{eqnarray}
\alpha&=&\sqrt{\frac{r^4\Sigma}{(\Sigma+\mu)r^4+\mu^2\left(r^2+ab\right)}},\label{laps}\\
\beta^{\varphi}&=&\frac{ra\mu(r^2+ab+b^2)}{(\Sigma+\mu)r^4+\mu^2\left(r^2+ab\right)},\quad\quad\quad\quad
\beta^{\psi}=\frac{rb\mu(r^2+ab+a^2)}{(\Sigma+\mu)r^4+\mu^2\left(r^2+ab\right)}\label{shift}
\end{eqnarray}}
In addition, we showed in section \ref{sec2} that the extrinsic curvature can be generated from scalar potentials $\omega_{\phi^i}$. In the coordinate system used above, these are
{\small
\begin{eqnarray}
\omega_{\varphi}=\frac{a(a^2-b^2)(r^2+ab+b^2)\cos^2\theta-r^2a(2a^2+2ab+r^2)}{(a-b)^2}+\frac{a(r^2+ab+a^2)^2(r^2+ab+b^2)}{\Sigma(a-b)^2}\nonumber\\
\end{eqnarray}
\begin{eqnarray}
\omega_{\psi}=\frac{br^2((a+b)^2+r^2)-b(a^2-b^2)(r^2+ab+a^2)\cos^2\theta}{(a-b)^2}-\frac{b(r^2+ab+a^2)(r^2+ab+b^2)^2}{\Sigma(a-b)^2}\nonumber\\
\end{eqnarray}}
It is important to mention
\begin{equation}
\Delta_2=\frac{\partial^2}{\partial\rho^2}+\frac{\partial^2}{\partial z^2}=\frac{1}{r^4}\left(r^2\frac{\partial^2}{\partial r^2}+r\frac{\partial}{\partial r}+\frac{\partial^2}{\partial\theta^2}\right)
\end{equation}
Now we will prove some useful lemmas for the main theorem.
\begin{lemma}\label{alphalem}
The function $\alpha$ in equation \eqref{alpha}  is nonnegative and has following bounds
\begin{equation}
\alpha=\frac{\tilde{K}^{2}_{0}}{2\Phi_0^6}+r^2(dv)^2=hr^{-6}
\end{equation}
where $h$ is a bounded nonnegative function.
\end{lemma}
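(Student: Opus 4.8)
The idea is to recast $\alpha$ as a sum of two manifestly nonnegative quantities and then bound the result. Since $\Phi_0$ is the conformal factor of the \emph{background} extreme Myers--Perry data, it solves the Lichnerowicz equation \eqref{33}, i.e.\ $\Delta_4\Phi_0=\tfrac16\tilde R_0\Phi_0-\tfrac16\tilde K_0^2\Phi_0^{-5}$. Dividing by $\Phi_0$ gives $\tfrac16\tilde R_0=\Delta_4\Phi_0/\Phi_0+\tfrac16\tilde K_0^2\Phi_0^{-6}$, and substituting this into the definition \eqref{alpha} eliminates the scalar curvature:
\[
\alpha \;=\; \frac{\Delta_4\Phi_0}{\Phi_0} \;+\; \frac{\tilde K_0^2}{\Phi_0^{6}}.
\]
So the whole statement reduces to a pointwise identity for $\Delta_4\Phi_0/\Phi_0$.

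Next I would prove $\Delta_4\Phi_0/\Phi_0 = r^2(dv)^2 - \tfrac12\tilde K_0^2\Phi_0^{-6}$, with $v$ the potential in the statement (essentially $v=\ln\Phi_0$ in the normalisation of Appendix \ref{AppC}); combined with the previous display this yields $\alpha=\tfrac{\tilde K_0^2}{2\Phi_0^6}+r^2(dv)^2$. Using that the flat metrics of \eqref{htild} satisfy $\delta_2=r^2(\td r^2+r^2\td\theta^2)$ (a direct computation from $\rho=\tfrac12 r^2\sin2\theta$, $z=\tfrac12 r^2\cos2\theta$), so that $(dv)^2\equiv|\nabla_{\delta_2}v|^2=r^{-2}|\nabla_{\delta_4}v|^2$, one has $\Delta_4\Phi_0/\Phi_0=\Delta_4 v+r^2(dv)^2$, and the claimed identity is equivalent to $\Delta_4 v=-\tilde K_0^2/(2\Phi_0^6)$, i.e.\ via \eqref{curv1} and \eqref{K1} to the scalar-curvature identity $\tilde R_0=6\,r^2(dv)^2-2\tilde K_0^2\Phi_0^{-6}$. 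This is the computational core: insert the explicit expressions \eqref{phi} for $\Phi_0,U,\sigma'_{ij}$ together with the formulas for the twist potentials $\omega_\varphi,\omega_\psi$, use the reduced (harmonic-map) field equations satisfied by $(\sigma'_{ij},\omega_i)$ because the five-dimensional spacetime is vacuum, and verify the identity directly (most easily with the computer-algebra methods used elsewhere in this appendix). Granting it, nonnegativity is immediate: $r^2(dv)^2\ge0$; $\Phi_0>0$ by Lemma \ref{lem3}; and $\tilde K_0^2\ge0$ because the bracket in \eqref{K1} is the cofactor quadratic form of $\sigma'_{ij}$ evaluated on $(\td\omega_\varphi,\td\omega_\psi)$, which is positive semidefinite since $\sigma'_{ij}>0$ and $\det\sigma'_{ij}=\rho^2$.

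Finally, for the bound set $h\equiv r^6\alpha=\tfrac12 r^6\tilde K_0^2\Phi_0^{-6}+r^8(dv)^2\ge0$ and show each term extends to a bounded function on $\mathbb{R}^4\setminus\{0\}$. Using \eqref{K1}, \eqref{phi} and the explicit potentials, $r^6\tilde K_0^2/\Phi_0^6$ and $r^8(dv)^2$ reduce to rational functions of $r^2$ and $\cos^2\theta$ over powers of $\Sigma$ (times the factors $e^{2U}$, $\det\sigma_{ij}$), and one checks boundedness (i) on the asymptotically flat end $r\to\infty$, (ii) on the cylindrical end $r\to0$, invoking the fall-off controlled in this appendix and the lower bound $r\Phi_0\ge(ab\mu)^{1/4}$ of Lemma \ref{lem3}, and (iii) — the delicate point — uniformly up to the axis $\rho=0$ ($\theta\to0,\tfrac\pi2$), where the apparent $1/\rho$ factors from $\tilde K_0^2$ and $(dv)^2$ must be seen to cancel against the vanishing of $\partial_\theta\omega_\varphi,\partial_\theta\omega_\psi$ and of the off-diagonal part of $\sigma'_{ij}$ there (a cancellation forced by smoothness of the background metric on $\Sigma$, but one that has to be read off the formulas). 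Together with $h\ge0$ this gives $\alpha=hr^{-6}$ with $h$ bounded and nonnegative, and in particular $\alpha\in L^{'3}_{-3}$, as needed in Section 4 for $\Delta_4-\alpha$ to be an isomorphism.

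The main obstacle is the scalar-curvature identity $\tilde R_0=6\,r^2(dv)^2-2\tilde K_0^2\Phi_0^{-6}$ of the middle step: this is precisely where the structure of the extreme Myers--Perry solution enters — that $\Phi_0$ is tuned to the extreme conformal factor and that $(\sigma'_{ij},\omega_i)$ solve the reduced vacuum equations — and it does not follow from soft arguments. A secondary difficulty is the uniform boundedness of $h$ near the poles of the axis.
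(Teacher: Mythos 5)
Your proposal is correct and follows essentially the same route as the paper: both proofs reduce the lemma to the single identity $\Delta_4 v=-\tilde K_0^2/(2\Phi_0^6)$ (equivalently $\Delta_{\tilde h}v=-\tilde K_{ab}\tilde K^{ab}/(2\Phi^6)$), which the paper invokes as a known property of the extreme Myers--Perry conformal data while you explicitly flag it as the computational core, and both then obtain $\alpha=\tfrac{\tilde K_0^2}{2\Phi_0^6}+r^2(dv)^2$, read off nonnegativity, and bound $h=r^6\alpha$ using Lemma~\ref{lem3}. The only cosmetic difference is that you start from the Lichnerowicz equation for $\Phi_0$ and eliminate $\tilde R_0$ directly, whereas the paper starts from the conformal transformation law for scalar curvature plus the Hamiltonian constraint; these are algebraically the same step, and your rearrangement is, if anything, slightly cleaner and more explicit about where the work lies.
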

\begin{proof}
First we know by conformal transformation $h_{ab}=\Phi^2\tilde{h}_{ab}$ the scalar curvature will be\footnote{There are some typos about factors in journal version.}
\begin{equation}
\tilde{R}=R\Phi^2+6\left(\Delta_{\tilde{h}}v+|\tilde{\nabla} v|^2\right)
\end{equation}
where $v=\log\Phi$. By constraint equations \eqref{vaccons1} and the fact that conformal extreme Myers-Perry satisfies in relation
\begin{equation}
\Delta_{\tilde{h}}v=-\frac{1}{2\Phi^6}\tilde{K}_{ab}\tilde{K}^{ab}
\end{equation}
we have
\begin{eqnarray}
\tilde{R}&=&{K}_{ab}{K}^{ab}\Phi^2-3\Phi^{-6}\tilde{K}_{ab}\tilde{K}^{ab}+6|\tilde{\nabla} v|^2\nonumber\\
&=&-2\tilde{K}_{ab}\tilde{K}^{ab}\Phi^{-6}+6e^{-2U}(dv)^2
\end{eqnarray}
Then by equations \eqref{curv1} and \eqref{K1} we have
\begin{equation}
\tilde{R}_0=-5\tilde{K}^2_0\Phi^{-6}+6r^2(dv)^2
\end{equation}
Therefore, $\alpha$ is
\begin{equation}
\alpha=\frac{\tilde{R}_{0}}{6}+\frac{5\tilde{K}^{2}_{0}}{6\Phi_0^6}=\frac{\tilde{K}^{2}_{0}}{2\Phi_0^6}+r^2(dv)^2=hr^{-6}
\end{equation}
\end{proof}
\begin{lemma}\label{lem3}
Let $\Phi_0$, $\tilde{R}_0$, and $\tilde{K}^2_0$ be defined as in (\ref{phi}), (\ref{curv1}), and (\ref{33}), respectively. Then we have following bounds:
\begin{enumerate}
\item $(ab\mu)^{1/4}\leq \left[(r^2+ab+b^2)(r^2+ab+a^2)\right]^{1/4}\leq r\Phi_0 \leq\left[(r^2+ab+b^2)(r^2+ab+a^2)+\mu^2\right]^{1/4}\label{21}$
\item $\abs{\tilde{R}_0}\leq \frac{C}{r^4}$ and $\abs{\tilde{K}^2_0}\leq\frac{C}{r^6}$\label{44}
\item  $|\Delta_4\Phi_0|\leq\frac{C}{r^6}$\label{10}
\end{enumerate}
\end{lemma}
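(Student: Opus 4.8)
All three bounds follow from the explicit formulas for $\Phi_0$, $U$, $\sigma'_{ij}$ and $\omega_{\varphi},\omega_{\psi}$ recorded in this appendix, so the plan is to put each quantity in closed form and then estimate term by term.

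\medskip\noindent\emph{Part 1.} First I would write out $\det\sigma=\sigma_{11}\sigma_{22}-\sigma_{12}^2$ from the formulas above; the $a^2b^2\mu^2/\Sigma^2$ pieces cancel, and since $\rho=r^2\sin\theta\cos\theta$ the defining relation $\Phi_0^4=\det\sigma/\rho^2$ reduces to
\[
(r\Phi_0)^4=(r^2+ab+a^2)(r^2+ab+b^2)+\frac{\mu}{\Sigma}\Big(a^2(r^2+ab+b^2)\sin^2\theta+b^2(r^2+ab+a^2)\cos^2\theta\Big),
\]
with the correction term manifestly nonnegative. Discarding it, using monotonicity in $r^2$, and noting $(ab+a^2)(ab+b^2)=ab(a+b)^2=ab\mu$ gives the two lower bounds. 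For the upper bound it is enough to show the correction is $\le\mu^2$; writing $P=a^2\cos^2\theta+b^2\sin^2\theta$ and $Q=a^2\sin^2\theta+b^2\cos^2\theta$ (so $\Sigma=r^2+ab+P$), the correction equals $\mu\,[Q(r^2+ab)+a^2b^2]/\Sigma$, and I would estimate $Q(r^2+ab)+a^2b^2\le(a+b)^2(r^2+ab)+(a+b)^2P=(a+b)^2\Sigma=\mu\Sigma$ using $Q\le\max(a^2,b^2)\le(a+b)^2$ and $a^2b^2=\max(a^2,b^2)\min(a^2,b^2)\le(a+b)^2P$ (since $P\ge\min(a^2,b^2)$ on $[0,\tfrac{\pi}{2}]$). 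This little inequality is the only non-mechanical step, and it is elementary.

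\medskip\noindent\emph{Parts 2 and 3.} For $\tilde K_0^2$ the plan is to substitute the explicit twist potentials and $\sigma'_{ij}=\Phi_0^{-2}\sigma_{ij}$ into \eqref{K1}, using $(\td f)^2=r^{-2}(\partial_rf)^2+r^{-4}(\partial_\theta f)^2$ for the $\delta_2$-inner product. The factor $\rho^{-4}=r^{-8}\sin^{-4}\theta\cos^{-4}\theta$ is singular on the axis $\theta\in\{0,\tfrac{\pi}{2}\}$, and the point is that this cancels: the $\sigma_{ij}$ carry powers of $\sin\theta,\cos\theta$, and --- using the extremality relation $\mu=(a+b)^2$ --- both $\partial_r\omega_i$ and $\partial_\theta\omega_i$ vanish on the appropriate axis component, so after cancellation $\tilde K_0^2$ is a rational function of $r^2$ and $\cos^2\theta$, which I would bound using the lower bound $r\Phi_0\ge(ab\mu)^{1/4}$ from Part~1 (this keeps the denominators away from zero near the cylindrical end, and at the Euclidean end produces the decay $\tilde K_0^2=O(r^{-8})$). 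For $\tilde R_0$ it is cleaner to invoke Lemma~\ref{alphalem}, which gives $\tilde R_0=-5\tilde K_0^2\Phi_0^{-6}+6r^2(\td v)^2$ with $v=\log\Phi_0$: the first term is controlled by the $\tilde K_0^2$ bound and the $\Phi_0$ lower bound, while $r^2(\td v)^2=(\partial_r\log\Phi_0)^2+r^{-2}(\partial_\theta\log\Phi_0)^2$ is estimated directly from the closed form of $\Phi_0$. Finally, writing $\Phi_0=\Psi^{1/4}/r$ with $\Psi=(r\Phi_0)^4$ and $\Delta_4 f=r^{-3}\partial_r(r^3\partial_r f)+r^{-2}(\sin\theta\cos\theta)^{-1}\partial_\theta(\sin\theta\cos\theta\,\partial_\theta f)$, one computes $\Delta_4\Phi_0$; the key observation at the Euclidean end is that $\Psi=r^4\big(1+\mu r^{-2}+O(r^{-4})\big)$, hence $\Phi_0=1+\tfrac{\mu}{4}r^{-2}+O(r^{-4})$, and since the flat four-dimensional Laplacian annihilates both a constant and $r^{-2}$ (its fundamental solution in dimension four), the would-be $O(r^{-4})$ terms cancel and $\Delta_4\Phi_0=O(r^{-6})$. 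Combining this with a direct estimate of the explicit remainder on the rest of $\Sigma$ yields the stated bound.

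\medskip\noindent\emph{Main obstacle.} The real work is the algebra in Part~2: organizing the substitution into \eqref{K1} (and, if one avoids Lemma~\ref{alphalem}, into \eqref{curv1}) so that the on-axis singularities visibly cancel --- which uses extremality in an essential way --- and what remains is a manifestly estimable rational function; the analogous cancellation of the leading $O(r^{-4})$ behaviour of $\Delta_4\Phi_0$ at spatial infinity in Part~3 is the other delicate point. Both are routine in principle but long, and are most safely carried out with computer algebra.
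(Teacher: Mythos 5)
Your Part 1 argument is correct and follows the same route as the paper: compute $(r\Phi_0)^4$ explicitly as the product $(r^2+ab+a^2)(r^2+ab+b^2)$ plus a nonnegative $\theta$-dependent correction $\mu[(r^2+ab)Q+a^2b^2]/\Sigma$ with $Q=a^2\sin^2\theta+b^2\cos^2\theta$, then discard the correction for the lower bounds and bound it by $\mu^2$ for the upper bound. The paper, however, merely \emph{asserts} that the correction is $\le\mu^2$ and moves on; you actually prove it, via the elementary chain $Q\le(a+b)^2$ and $a^2b^2\le(a+b)^2\min(a^2,b^2)\le(a+b)^2P$ with $P=a^2\cos^2\theta+b^2\sin^2\theta$, so that $(r^2+ab)Q+a^2b^2\le(a+b)^2(r^2+ab+P)=\mu\Sigma$. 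That is a genuine gap-fill, not just reproduction. (Incidentally, your formula for the correction, with $Q=a^2\sin^2\theta+b^2\cos^2\theta$ in the numerator, is the right one; the displayed expression in Appendix C of the paper has $\cos^2\theta$ and $\sin^2\theta$ swapped, a typo which happens not to affect the bound.)

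For Parts 2 and 3 the paper declines to give any proof at all, saying only that the remaining bounds ``require lengthy algebraic manipulations'' carried out with computer algebra. Your sketch is a reasonable and correct plan: use Lemma~\ref{alphalem} to write $\tilde R_0=-5\tilde K_0^2\Phi_0^{-6}+6r^2(\td v)^2$ (this is not circular, since that lemma is proved from the conformal transformation law and the Hamiltonian constraint without appeal to Lemma~\ref{lem3}); control $\tilde K_0^2$ from the explicit twist potentials, using that $\omega_\varphi,\omega_\psi$ are constant in $r$ on the axis $\rho=0$ (so $\partial_r\omega_i$ vanishes there) and that $\partial_\theta\omega_i\propto\sin\theta\cos\theta$, which cancels the $\rho^{-4}$ singularity; and for Part~3 use the expansion $\Phi_0=1+\tfrac{\mu}{4}r^{-2}+O(r^{-4})$ together with $\Delta_4(\mathrm{const})=\Delta_4(r^{-2})=0$ in $\mathbb{R}^4$ to show $\Delta_4\Phi_0=O(r^{-6})$ at the Euclidean end, with the lower bound $r\Phi_0\ge(ab\mu)^{1/4}$ from Part~1 controlling the behaviour near $r=0$. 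This is exactly the kind of structural argument one would want in place of the paper's appeal to MAPLE; your sketch is not a full proof (you would still need to carry out the cancellations and tabulate the bounds), but it correctly identifies the two delicate points --- the axis cancellation, which is where extremality and the $t$-$\phi^i$ structure of the twist potentials enter, and the near-infinity cancellation coming from harmonicity of $r^{-2}$.
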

\begin{proof} We will prove just 1 here;  the remaining bounds require lengthy algebraic manipulations.
\begin{enumerate}
\item We have 
\begin{eqnarray}
r^2\Phi_0^2&=&\left[(r^2+ab+b^2)(r^2+ab+a^2)+\frac{\mu(r^2+ab)(a^2\cos^2\theta+b^2\sin^2\theta)+\mu a^2b^2}{\Sigma}\right]^{1/2}\nonumber\\
&\leq&\left[(r^2+ab+b^2)(r^2+ab+a^2)+\mu^2\right]^{1/2}\nonumber\\
\end{eqnarray}
so if $r\rightarrow\infty$ then we have minimum of $r^2\Phi_0^2$
\begin{equation}
\sqrt{(r^2+ab+b^2)(r^2+ab+a^2)}\leq r^2\Phi_0^2 
\end{equation}
Therefore for $a,b>0$  we have 
{\small\begin{equation}
  (ab\mu)^{1/4}\leq \left[(r^2+ab+b^2)(r^2+ab+a^2)\right]^{1/4}\leq r\Phi_0 \leq\left[(r^2+ab+b^2)(r^2+ab+a^2)+\mu^2\right]^{1/4}
\end{equation}}
\end{enumerate}
\end{proof}
\begin{lemma}\label{lem5}
If we transform metric functions by (\ref{34}) for small $\lambda$ (i.e.$ -\lambda_0 < \lambda < \lambda_0$)  then
\begin{enumerate}
\begin{minipage}{0.4\linewidth} 
\item $\norm{\tilde{R}_{\lambda}}_{L'^3_{-3}}\leq C$
\item $\norm{\tilde{K}^{2}_{\lambda}}_{L'^3_{-3}}\leq C$
\item $\norm{D_1\tilde{R}_{\lambda}}_{L'^3_{-3}}\leq C$
\end{minipage} 
\begin{minipage}{0.6\linewidth}
\item $\norm{D_1\tilde{K}^{2}_{\lambda}}_{L'^3_{-3}}\leq C$
\item $\norm{D_1\tilde{R}_{\lambda_1}-D_1\tilde{R}_{\lambda_2}}_{L'^3_{-3}}\leq C\abs{\lambda_1-\lambda_2}$
\item $\norm{D_1\tilde{K}^{2}_{\lambda_1}-D_1\tilde{K}^{2}_{\lambda_2}}_{L'^3_{-3}}\leq C\abs{\lambda_1-\lambda_2}$
\end{minipage} 
\end{enumerate}
\end{lemma}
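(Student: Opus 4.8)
The strategy is to treat all six bounds uniformly, by isolating exactly what changes when one passes from $\tilde{R}_0,\tilde{K}^2_0$ to $\tilde{R}_\lambda,\tilde{K}^2_\lambda$ under the substitution \eqref{34}. From \eqref{curv1} one has $\tilde{R}_0=r^2\bigl(-2\Delta_2 U+\frac{\det\td\sigma'}{2\rho^2}\bigr)$ (the $e^{-2U}$ prefactor cancels in the definition of $\tilde{R}_0$), so after \eqref{34} this becomes a polynomial in $\lambda$, $\tilde{R}_\lambda=\tilde{R}_0+\lambda\,\mathcal{R}_1+\lambda^2\,\mathcal{R}_2$, whose coefficients are algebraic expressions in $U,\sigma'_{ij},U_1,\bar\sigma_{ij}$ and their first and second derivatives, carrying the weight $\rho^{-2}$. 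Likewise, from \eqref{K1}, $\tilde{K}^2_0=\frac{r^2}{2\rho^4}\bigl[\sigma'_{11}(\td\omega_\psi)^2+\sigma'_{22}(\td\omega_\varphi)^2-2\sigma'_{12}(\td\omega_\varphi)\cdot(\td\omega_\psi)\bigr]$, so $\tilde{K}^2_\lambda=\tilde{K}^2_0+\lambda\,\mathcal{K}_1+\lambda^2\,\mathcal{K}_2+\lambda^3\,\mathcal{K}_3$, the coefficients being algebraic in $\sigma'_{ij},\omega_\varphi,\omega_\psi,\bar\sigma_{ij},\omega_1,\omega_2$, their first derivatives, and the weight $\rho^{-4}$ (the precise degrees in $\lambda$ are immaterial to the argument). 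First I would write these expansions out explicitly and name the coefficients.

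The essential observation is that for every $i\ge 1$ each term of $\mathcal{R}_i$ and $\mathcal{K}_i$ contains at least one factor among $U_1,\bar\sigma_{ij},\omega_1,\omega_2$ or a derivative of one of them. Since $U_1,\bar\sigma_{ij}\in C^\infty_c(\mathbb{R}^4\backslash\{0\})$ and $\omega_1,\omega_2\in C^\infty_c(\mathbb{R}^4\backslash\Gamma)$, all of these coefficients are supported in a single fixed compact set $K_0\subset\mathbb{R}^4\backslash\{0\}$ (and, for the ones involving $\omega_1,\omega_2$, in fact $K_0\subset\mathbb{R}^4\backslash\Gamma$), independent of $\lambda$. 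On $K_0$ the background functions $U,\sigma'_{ij},\omega_\varphi,\omega_\psi,\Phi_0$ and the weights $\rho^{-1},\rho^{-2},\rho^{-4}$ are all smooth and bounded --- here one uses that $K_0$ stays away both from the origin and from the axis $\Gamma$, which is exactly why $\omega_1,\omega_2$ are required to avoid $\Gamma$. Hence each $\mathcal{R}_i,\mathcal{K}_i$ with $i\ge1$ is a bounded function with fixed compact support, so $\norm{\mathcal{R}_i}_{L^{'3}_{-3}}\le C$ and $\norm{\mathcal{K}_i}_{L^{'3}_{-3}}\le C$ with $C=C(a,b)$ independent of $\lambda$. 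For the $\lambda$-independent pieces, I would invoke the decay estimates of Lemma \ref{lem3} at the Euclidean end together with the behaviour of the conformal data at the cylindrical end worked out in Appendix \ref{AppC} to conclude $\tilde{R}_0,\tilde{K}^2_0\in L^{'3}_{-3}$.

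Granting this, the six estimates follow by elementary manipulations. For 1 and 2 the triangle inequality gives $\norm{\tilde{R}_\lambda}_{L^{'3}_{-3}}\le\norm{\tilde{R}_0}_{L^{'3}_{-3}}+\abs{\lambda}\norm{\mathcal{R}_1}_{L^{'3}_{-3}}+\abs{\lambda}^2\norm{\mathcal{R}_2}_{L^{'3}_{-3}}\le C$, using $\abs{\lambda}<\lambda_0$, and likewise for $\tilde{K}^2_\lambda$. For 3 and 4, $D_1$ is differentiation in $\lambda$, so the background term drops out: $D_1\tilde{R}_\lambda=\mathcal{R}_1+2\lambda\,\mathcal{R}_2$ and $D_1\tilde{K}^2_\lambda=\mathcal{K}_1+2\lambda\,\mathcal{K}_2+3\lambda^2\,\mathcal{K}_3$, which are bounded in $L^{'3}_{-3}$ uniformly for $\abs{\lambda}<\lambda_0$ by the previous paragraph. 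For 5 and 6, the differences are $D_1\tilde{R}_{\lambda_1}-D_1\tilde{R}_{\lambda_2}=2(\lambda_1-\lambda_2)\mathcal{R}_2$ and $D_1\tilde{K}^2_{\lambda_1}-D_1\tilde{K}^2_{\lambda_2}=(\lambda_1-\lambda_2)\bigl(2\mathcal{K}_2+3(\lambda_1+\lambda_2)\mathcal{K}_3\bigr)$, which manifestly have $L^{'3}_{-3}$ norm bounded by $C\abs{\lambda_1-\lambda_2}$ once the coefficients are controlled and $\abs{\lambda_1},\abs{\lambda_2}<\lambda_0$.

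The part I expect to require the most care is showing $\tilde{R}_0,\tilde{K}^2_0\in L^{'3}_{-3}$, which enters estimates 1 and 2: this needs the sharp asymptotics of $\Phi_0,U,\sigma'_{ij}$ and the twist potentials at \emph{both} ends --- sufficiently fast power-law decay at the Euclidean end (where $\delta=-3$ is demanding) and controlled behaviour at the cylindrical end, where the conformal rescaling by $\Phi_0$ is what regularizes the geometry --- all of which is the content of Appendix \ref{AppC} and Lemma \ref{lem3}. Estimates 3--6 do not see the background term at all and are genuinely routine; the only bookkeeping they require is the compact-support localization of the perturbation-dependent coefficients away from $\{0\}$ and from the axis $\Gamma$.
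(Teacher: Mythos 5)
Your proposal follows essentially the same route as the paper: expand $\tilde{R}_\lambda$ and $\tilde{K}^2_\lambda$ as polynomials in $\lambda$, control the $\lambda$-independent pieces via the bounds of Lemma \ref{lem3} (and the asymptotics in Appendix \ref{AppC}), and bound every $\lambda$-dependent coefficient by noting it carries at least one factor of $U_1,\bar\sigma_{ij},\omega_1,\omega_2$, whose compact support away from the origin (and, for the $\omega$'s, away from $\Gamma$) localizes it to a region where the background functions and the weights $\rho^{-2},\rho^{-4}$ are bounded; the six estimates then follow from the triangle inequality and differentiation in $\lambda$, exactly as in the paper's explicit treatment of items 1 and 4. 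Your systematic bookkeeping of the coefficients $\mathcal{R}_i,\mathcal{K}_i$ is a slightly cleaner packaging of the same argument, and your flag that the real content lies in $\tilde{R}_0,\tilde{K}^2_0\in L^{'3}_{-3}$ matches where the paper leans on Lemma \ref{lem3}.
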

\begin{proof} We will prove numbers 1 and 4 of these inequalities and others will be similar.
1) By definition of $\tilde{R}_{\lambda}$   we have
{\small 
\begin{eqnarray}
\tilde{R}_{\lambda}&=&-r^2\Delta_2(U+\lambda U_1)+r^2\frac{\det(\td\sigma'+\lambda\td\bar{\sigma})}{2\rho^2}\nonumber\\
&=&-r^2\Delta_2 U-r^2\lambda\Delta_2 U_1+\frac{r^2}{2\rho^2}\left[(\td\sigma'_{11}+\lambda\td\bar{\sigma}_{11})\cdot (\td\sigma'_{22}+\lambda\td\bar{\sigma}_{22})-(\td\sigma'_{12}+\lambda\td\bar{\sigma}_{12})^2\right]\nonumber\\
&=&\tilde{R}_0-r^2\lambda\Delta_2 U_1+\frac{r^2}{2\rho^2}\left[\lambda\td\bar{\sigma}_{11}\cdot \td\sigma'_{22}+\lambda(\td\sigma'_{11}+\lambda\td\bar{\sigma}_{11})\cdot \td\bar{\sigma}_{22}-\lambda(2\td\sigma'_{12}+\lambda\td\bar{\sigma}_{12})\cdot\td\bar{\sigma}_{12}\right]\nonumber\\
\end{eqnarray}}
Then by triangle inequality we have
{\small
\begin{eqnarray}
\norm{\tilde{R}_{\lambda}}_{L^{'3}_{-3}}&\leq&\norm{\tilde{R}_0}_{L^{'3}_{-3}}+|\lambda|\norm{r^2\Delta_2 U_1}_{L^{'3}_{-3}}+\norm{\frac{r^2}{2\rho^2}\left(\lambda\td\bar{\sigma}_{11}\cdot \td\sigma'_{22}+\lambda(\td\sigma'_{11}+\lambda\td\bar{\sigma}_{11})\cdot \td\bar{\sigma}_{22}\right)}_{L^{'3}_{-3}}\nonumber\\
&+&\norm{\frac{r^2}{2\rho^2}\left(\lambda(2\td\sigma'_{12}+\lambda\td\bar{\sigma}_{12})\cdot\td\bar{\sigma}_{12}\right)}_{L^{'3}_{-3}}\nonumber\\
&\leq&C
\end{eqnarray}}
We used inequality of Lemma \ref{lem3}-\ref{44} and the fact that functions ${U}_1$ and $\bar{\sigma}_{ij}$  have compact support outside the origin.\\

\noindent 4) By definition of full contraction of extrinsic curvature we have
\begin{eqnarray}
\tilde{K}^2_{\lambda}&=&\frac{r^2}{2\rho^4}\Bigg[(\sigma'_{11}+\lambda\bar{\sigma}_{11})(\td\omega_{\psi}+\lambda\td\omega_{2})^2+(\sigma'_{22}+\lambda\bar{\sigma}_{22})(\td\omega_{\varphi}+\lambda\td\omega_{1})^2\nonumber\\
&-&2(\sigma'_{12}+\lambda\bar{\sigma}_{12})(\td\omega_{\psi}+\lambda\td\omega_{2})\cdot(\td\omega_{\varphi}+\lambda\td\omega_{1})\Bigg]
\end{eqnarray}
Then have
\begin{eqnarray}
D_1\tilde{K}^2_{\lambda}&=&\frac{r^2}{2\rho^4}\Bigg[\bar{\sigma}_{11}(\td\omega_{\psi}+\lambda\td\omega_{2})^2+2(\sigma'_{11}+\lambda\bar{\sigma}_{11})\td\omega_{2}\cdot(\td\omega_{\psi}+\lambda\td\omega_{2})+\bar{\sigma}_{22}(\td\omega_{\varphi}+\lambda\td\omega_{1})^2\nonumber\\
&+&(\sigma'_{22}+\lambda\bar{\sigma}_{22})\td\omega_{1}\cdot(\td\omega_{\varphi}+\lambda\td\omega_{1})-2\lambda\bar{\sigma}_{12}(\td\omega_{\psi}+\lambda\td\omega_{2})\cdot(\td\omega_{\varphi}+\lambda\td\omega_{1})\nonumber\\
&-&4(\sigma'_{12}+\lambda\bar{\sigma}_{12})\lambda\td\omega_{2}\cdot(\td\omega_{\varphi}+\lambda\td\omega_{1})-4(\sigma'_{12}+\lambda\bar{\sigma}_{12})(\td\omega_{\psi}+\lambda\td\omega_{2})\cdot\td\omega_{1}\Bigg]
\end{eqnarray}
Then by triangular inequality and the fact that $\omega_i$ has compact support outside axis and $\bar{\sigma}_{ij}$ has compact support outside the origin one can show it is bounded.
\end{proof}
\noindent Finally,we have following limits
\begin{eqnarray}
\lim_{s\rightarrow\infty} \omega_{\varphi}&=&\frac{ab\mu\cos^2\theta}{(a-b)}+\frac{a^3b\mu(a+b)}{(ab+a^2\cos^2\theta+b^2\sin^2\theta)(a-b)^2}\\
\lim_{s\rightarrow\infty} \omega_{\psi}&=&-\frac{ab\mu\cos^2\theta}{(a-b)}-\frac{ab\mu(a+b)}{(ab+a^2\cos^2\theta+b^2\sin^2\theta)(a-b)^2}\\
\lim_{s\rightarrow\infty} r\Phi_0&=&\frac{\mu(\mu-ab)}{2(ab+a^2\cos^2\theta+b^2\sin^2\theta)} \; . 
\end{eqnarray}

\bibliographystyle{unsrt}
\bibliographystyle{abbrv}  
\bibliography{masterfile}
       
\end{document}